\theoremstyle{plain}
\newtheorem{mythm}{Theorem} \numberwithin{mythm}{section}
\newtheorem{myprop}[mythm]{Proposition}
\newtheorem{mylemma}[mythm]{Lemma}
\newtheorem{mydef}[mythm]{Definition}
\newtheorem{myassump}[mythm]{Assumption}
\newtheorem{mycor}[mythm]{Corollary}
\DeclareMathAlphabet\scr{U}{scr}{m}{n}
\SetMathAlphabet\scr{bold}{U}{scr}{b}{n}
  \DeclareFontFamily{U}{scr}{\skewchar\font'177}%
  \DeclareFontShape{U}{scr}{m}{n}{<-6>rsfs5<6-8>rsfs7<8->rsfs10}{}%
  \DeclareFontShape{U}{scr}{b}{n}{<-6>rsfs5<6-8>rsfs7<8->rsfs10}{}%
\numberwithin{equation}{section}
\DeclareMathOperator{\R}{\mathbb{R}}
\DeclareMathOperator{\TC}{TC}
\DeclareMathOperator{\TAC}{TAC}
\DeclareMathOperator{\DE}{DE}
\DeclareMathOperator{\E}{\mathbb{E}}
\def\vep{\varepsilon}
\def\dbR{\mathbb{R}}
\def\dbP{\mathbb{P}}
\begin{document}
\title{Optimal Rebalancing Frequencies for Multidimensional Portfolios\footnote{For helpful comments, we thank Kai Du, Paolo Guasoni, Yaroslav Melnyk, and Max Reppen. We are also indebted to latter for generously sharing his research results. Moreover, the pertinent remarks of an anonymous referee are gratefully acknowledged.} }
\author{Ibrahim Ekren\thanks{ETH Z\"urich, Departement Mathematik, R\"amistrasse 101, CH-8092, Z\"urich, Switzerland, email:
\texttt{ibrahim.ekren@math.ethz.ch}.}
\and
Ren Liu\thanks{ETH Z\"urich, Departement Mathematik, R\"amistrasse 101, CH-8092, Z\"urich, Switzerland, email:
\texttt{ren.liu@math.ethz.ch}.} 
\and
Johannes Muhle-Karbe\thanks{University of Michigan, Department of Mathematics, 530 Church Street, Ann Arbor, MI 48109, USA, email \texttt{johanmk@umich.edu}. Parts of this research were completed while this author was visiting the Forschungsinstitut f\"ur Mathematik (FIM) at ETH Z\"urich.}
}

\date{May 24, 2017}
\pagestyle{plain}
\maketitle
\begin{abstract}
We study optimal investment with multiple assets in the presence of small proportional transaction costs. Rather than computing an asymptotically optimal no-trade region, we optimize over suitable trading frequencies. We derive explicit formulas for these and the associated welfare losses due to small transaction costs in a general, multidimensional diffusion setting, and compare their performance to a number of alternatives using Monte Carlo simulations.
\end{abstract}

\noindent\textbf{Mathematics Subject Classification: (2010)}: 91G10, 91G80.

\noindent\textbf{JEL Classification:} G11. 

\noindent\textbf{Keywords:} transaction costs, optimal trading frequency, optimal investment, multiple assets

\section{Introduction}
A basic problem in asset management is when and how to rebalance portfolios. In doing so, traders need to strike a balance between closely tracking a frictionless target portfolio that implements the optimal risk-return tradeoff, and limiting transaction costs, which make all too frequent trades prohibitively expensive. 

In practice, this issue is often addressed by specifying a ``suitable'' trading frequency, daily, weekly, or monthly rebalancing say, in a more or less ad hoc manner. Theory, however, points out that this approach is not optimal. Indeed, for proportional transaction costs -- a reasonable assumption for mid-sized investors -- there is large body of literature documenting that optimal strategies should be ``move-'' rather than ``time-based'' \cite{magill.constantinides.76,constantinides.86,davis.norman.90,shreve.soner.94,whalley.wilmott.97}. To wit, rebalancing times should not be chosen exogenously, but determined endogenously by the excursions of the investors' actual portfolios from their frictionless target. In recent years, there has been substantial progress on problems of this type, leading to a rather complete analysis and explicit rebalancing rules in the limiting regime of small transaction costs for the case of a \emph{single} risky asset \cite{soner.touzi.13,rosenbaum.tankov.14,martin.12,kallsen.muhlekarbe.15}. 

In contrast, much less is known about the practically very important case of \emph{many} risky assets. There are some numerical results for the bivariate Black-Scholes model~\cite{muthuraman.kumar.06}, but despite some recent progress in asymptotic analysis \cite{bichuch.shreve.13,possamai.al.15}, problems of this kind have remained rather intractable.\footnote{Models with fixed or quadratic trading costs are more tractable, see \cite{altarovici.al.15} or \cite{garleanu.pedersen.13a,dufresne.al.15,moreau.al.15,guasoni.weber.15}, respectively, but the issue is only exacerbated for nonlinear transaction costs such as the ones derived from the square-root price impact advocated by many practitioners \cite{almgren.al.05,toth.al.11}.}

In the present study, we therefore revisit the simpler time-based approach. In a general multidimensional diffusion setting, we consider an investor who maximizes her expected relative returns, penalized for the corresponding (local) variances and transaction costs.\footnote{Related ``local'' criteria are used in \cite{rosenbaum.tankov.14,martin.12,garleanu.pedersen.13a,guasoni.mayerhofer.14}. Under suitable integrability conditions (compare, e.g., \cite{kallsen.li.13}), our asymptotic results could be extended to global criteria.} In the limit for small transaction costs, we explicitly determine the optimal ``time-based rebalancing rule'', for which the next trading time is already specified when the current trade is implemented.\footnote{A special case are the constant trading frequencies considered in the previous literature, e.g., \cite{bertsimas.al.00,hayashi.mykland.05}; more generally, the waiting times can depend on current market characteristics.} We also provide an explicit asymptotic formula quantifying the performance loss compared to the frictionless case. These results allow to shed new light on the rebalancing of (multidimensional) portfolios in a number of ways.

First, for a single risky asset, they allow us to explicitly quantify the suboptimality of time-based relative to move-based rebalancing rules \cite{janecek.shreve.04,martin.12,soner.touzi.13,kallsen.muhlekarbe.15}. It turns out the the ratio of the corresponding utility losses compared to the frictionless case is a universal constant, independent of market or preference parameters: $(12/\pi)^{1/3} \approx 1.56.$ This number should be compared to the constant $2^{1/3} \approx 1.26$ differentiating the optimal move-based strategy~\cite{guasoni.mayerhofer.14} (which is implemented by reflection off its trading boundaries) from the optimal strategy within the class of strategies that trade back immediately to the frictionless target once these boundaries are breached \cite{melnyk.seifried.14}. Whence, passing to the move-based strategy introduces an additional ``suboptimality factor'' of almost the same size ($\approx 1.24$) as trading back immediately to the frictionless target ($\approx 1.26$).

Second, we provide an explicit formula for the performance of a simple benchmark strategy in a general multidimensional setting. In the one-dimensional case, the crucial statistic for the welfare losses due to transaction costs is the diffusion coefficient of the frictionless target weight \cite{whalley.wilmott.97,janecek.shreve.04,kallsen.muhlekarbe.15}. In our setting, this role is played by the diffusion matrix of the difference between the frictionless target and its buy-and-hold approximation. This quantity determines both the optimal waiting times and the performance of the corresponding policy through its  $L_{2,1}$-norm defined in \eqref{eq:L21} and, weighted with the risky assets' diffusion matrix, through its trace norm. 

Third, we perform a number of numerical tests that compare the performance of our time-based trading rule with various alternatives. For Black-Scholes models with constant investment opportunities, we find that the performance of the optimal time-based rebalancing rule virtually coincides with the optimal move-based portfolio, both for one and for two risky assets. In these settings, the optimal trading frequency for a 1\% transaction cost is less than every two years, so that the particular method for rebalancing does not play a crucial role. To illustrate how this changes in models with stochastic investment opportunities, we consider a truncated version of the model of Kim and Omberg~\cite{kim.omberg.96}, where expected returns are mean-reverting and optimal strategies are of trend-following type. Here, the optimal trading frequency increases to around once every half year, and optimal move-based strategies offer a marked improvement when they can be computed explicitly for a single risky asset. With more than one risky asset, the no-trade regions characterizing optimal move-based strategies are no longer known explicitly, and numerical algorithms for their computation are not available either. However, we find that even for correlated risky assets, a concatenation of univariate no-trade regions (which is asymptotically optimal for uncorrelated risky assets in the high risk aversion limit \cite{guasoni.muhlekarbe.15}) still outperforms a time-based rebalancing rule.

The message of these results is mixed. With constant investment opportunities, time-based rebalancing delivers virtually optimal results, with explicit formulas both for optimal trading times and their performance in arbitrary dimensions. Whence, they are an appealingly simple alternative to the less tractable optimal policies in these settings. In contrast, with stochastic investment opportunities, e.g., if optimal strategies are of trend following type, our results suggest that even ad-hoc multidimensional no-trade regions appear to be better suited than time-based rebalancing plans. Further developments in this direction -- maybe using a more refined discretization scheme\footnote{Gobet and Landon~\cite{gobet.landon.14} study the discretization error for strategies that rebalance when price increments exit an ellipsoid. Jointly minimizing transaction costs and this discretization error is more involved, requiring numerical methods to optimize functions from $\dbR^d$ to $\dbR$ derived from the expectations of hitting times.} as in Gobet and Landon~\cite{gobet.landon.14} -- therefore remain a challenging direction for future research.

The remainder of the article is organized as follows: Section~\ref{sec:model_mul} introduces the model and the optimization problem. Our main results are collected in Section~\ref{sec:results}. Subsequently, we numerically analyze the performance of our optimal strategy in different baseline models and compare it to a number of alternatives by means of Monte Carlo simulations. All proofs are collected in Appendix~\ref{heuristicmainresult}.

\paragraph{Notation} Throughout the article, $C$ stands for a generic positive constant that may vary from expression to expression. $b^\top$ denotes the transpose of a column vector $b\in \R^m$, $\|b\|_{\R^m}$ its Euclidean norm on $\R^m$, and $bc$ refers to the inner product of $b\in \R^m$ and $c\in \R^m$. 

 We write by $\mathcal{M}_{m \times d}(\R)$ for the space of $m \times d$ matrices with real entries equipped with the $L_{2,1}$- norm:
\begin{align}\label{eq:L21}
\|A\|_{2,1} := \sum_{i=1}^m \sqrt{\sum_{j=1}^d |A^{ij}|^2}.
\end{align}
For an $m \times m$ matrix $A$, $\mathrm{tr}(A) := \sum_{i=1}^m A^{ii}$ denotes its trace. Moreover, we write $\|A\|_F:=  \sqrt{\sum_{i=1}^m \sum_{j=1}^d  |A^{ij}|^2}$ for the Frobenius norm of a $m\times d$ matrix $A$ and denote by $(A^i )_{1\leq i \leq m}\in \R^d$ the i-th row of the matrix $A$.

Finally, for $E \subseteq \R^p$ and $\mathbb{K}=\R^m$ or $\mathcal{M}_{m\times d}(\R)$, we denote by $C^j_b (E; \mathbb{K})$, $j\in\mathbb{N}$ the class of $\mathbb{K}$-valued, bounded, $j$ times differentiable functions with bounded derivatives up to order $j$. 

\section{Model}\label{sec:model_mul}
\subsection{Market}
On a filtered probability space $(\Omega,\mathcal{F},\mathbb{P})$, we consider a financial market which consists of $m+1$ assets: one riskless asset, normalized to one, and $m$ risky assets driven by a $d$-dimensional Brownian motion $B=(B^1,\ldots,B^d)^\top$:
\begin{align*}
\frac{d S^i_t}{S^i_t} &= \mu^i(Y_t) dt +  \sum_{j=1}^m \sigma^{ij}(Y_t) d B^j_t, \quad i=1,\ldots,m.\label{dynamicsS1}
\end{align*}
Here, the state variable $Y=(Y^1,\ldots,Y^p)^\top \in \R^p$ is an autonomous diffusion process,
\begin{align*}
d Y^i_t &= b^i(Y_t) dt + \sum_{j=1}^p g^{ij}(Y_t) d B^j_t. 
\end{align*}
The expected excess returns of the risky assets are collected in the vector-valued process $\mu(Y)= (\mu^1(Y),\ldots,\mu^m(Y))^\top$; the entries $\sigma^{ij}(Y)$ of the matrix-valued process $\sigma(Y)$ describe the exposures of risky asset $i$ with respect to the shocks induced by the $j$-th component of the Brownian motion $B$. Likewise, $b(Y)=(b^1(Y),\ldots,b^p(Y))^\top \in \mathbb{R}^m$ and $g(Y)=(g^{ij}(Y))_{1 \leq i \leq p, 1 \leq j \leq d} \in \mathbb{R}^{p\times d}$ are the drift coefficient and diffusion matrix of the state variable. Throughout, we impose the following regularity conditions:

\begin{myassump}\label{ass:1}
Let $E \subset \R^p$ be the support of the state variable $Y$. 
\begin{enumerate}
\item $\mu \in C_b^{2}(E;\R^m)$, $b \in C_b^{2}(E;\R^p)$, $\sigma \in C^2_b(E,\mathcal{M}_{m\times d}(\R))$, and $g \in C^2_b(E,\mathcal{M}_{p\times d}(\R))$. That is, the drift and diffusion coefficients are bounded and twice continuously differentiable with bounded derivatives. In particular, $\sup_{y\in E} \|\sigma(y)\|_{2,1}<K_\sigma$ for some constant $K_\sigma>0$.

\item The covariance matrix of the risky assets, $\Sigma(Y_t) := \sigma(Y_t) \sigma(Y_t)^\top$ is uniformly elliptic, so that its inverse $\Sigma^{-1} = (\xi^{ij})_{1\leq i,j \leq m}$ exists for all $t\in[0,T]$ and is uniformly bounded on  $E$.
\end{enumerate}
\end{myassump}

\subsection{Frictionless Optimization}

To set the stage, we first briefly recapitulate the frictionless case, where portfolio rebalancing is costless. To this end, consider an investor with initial wealth $v>0$. If she holds fractions $w_t=(w_t^1,\ldots,w_t^m)^\top$ in the risky assets at time $t$, her wealth has the following dynamics:
\begin{equation*}\label{defV}
\frac{d V_t}{V_t} = \sum_{i=1}^m  w^i_t \left(\mu^i(Y_t) dt +\sigma^{i}(Y_t) d B_t\right),\quad  \quad V_0=v.
\end{equation*}
As in \cite{dufresne.al.15,guasoni.mayerhofer.14}, the investor maximizes her expected \emph{relative} returns, penalized for the corresponding variances. In the continuous-time limit, this leads to the following local mean-variance criterion:\footnote{The second equality follows from Assumption~\ref{ass:1}(ii). Similar criteria formulated in terms of the \emph{absolute} increments of the wealth process have been studied by \cite{kallsen.02} in the frictionless case and by \cite{martin.schoeneborn.11,garleanu.pedersen.13a,garleanu.pedersen.16,martin.12} with trading costs.}
\begin{equation}\label{objective}
F(w):=\frac{1}{T}\E\left[\int_0^T \frac{d V_t}{V_t} -\frac{\gamma}{2} \int_0^T\frac{d[V]_t}{V_t^2}\right]=\frac{1}{T}\int_0^T \E\left[w(Y_t)^\top \mu(Y_t) -\frac{\gamma}{2} w(Y_t)^\top \Sigma(Y_t) w(Y_t) \right]dt\to \max!
\end{equation}
Here, the risk aversion parameter $\gamma>0$ trades off the relative importances of expected returns and variances. Pointwise maximization readily yields that this objective function is maximized by the \emph{Merton portfolio}:
\begin{equation}\label{merton}
w^*(Y_t) :=(w^{*,1}(Y_t),\ldots,w^{*,m}(Y_t))^\top = \frac{1}{\gamma} \Sigma^{-1}(Y_t) \mu(Y_t).
\end{equation}
The corresponding optimal frictionless performance is obtained by plugging~\eqref{merton} back into~\eqref{objective}:
$$F(w^*)=\frac{1}{T} \E\left[\int_0^T \frac{\mu(Y_t)^\top \Sigma^{-1}(Y_t) \mu(Y_t)}{2\gamma}dt\right].$$

Throughout, we assume that the Merton portfolio short sells neither the safe nor the risky assets, and never invests all funds into one of these:

\begin{myassump}\label{ass:2}
$0 \leq w^{*,i}(Y_t) < 1$, $i=1,\ldots,m$, and  $0 < \sum_{i=1}^m w^{*,i}(Y_t) \leq 1$.
\end{myassump}

\subsection{Introducing transaction costs}

Now, we add proportional transaction costs $\varepsilon>0$ to the optimization problem~\eqref{objective}. Even with arbitrarily small transaction costs, any strategy of infinite variation leads to immediate bankruptcy. In particular, the Merton portfolio \eqref{merton} cannot be implemented, because the corresponding number of risky shares generically follows a diffusion process.\footnote{The only exceptions occur if it happens to be of buy-and-hold type because it lies in one of the corners of the unit simplex.} 

In the general multidimensional setting considered here, it is infeasible to compute the no-trade region characterizing the optimal policy, even in the limit for small transaction costs. Therefore, we focus here on the less ambitious goal of determining the optimal trading frequency within the class of ``time-based'' rebalancing strategies, which can be described as follows. At time $\tau_0=0$, set up the Merton portfolio $w^*(Y_{\tau_0})$ from \eqref{merton}, and choose the next rebalancing date $\tau_1$ based on the current market characteristics. Until time $\tau_1$, let the portfolio evolve uncontrolled, before rebalancing it back to the Merton portfolio $w^*(Y_{\tau_1})$ at $\tau_1$.\footnote{Without passing to the small-cost limit, the optimal target portfolio generally differs from the frictionless optimizer, see~\cite{garleanu.pedersen.16} for a detailed discussion in a more tractable model with quadratic costs. In the small-cost limit, the distinction disappears, and rebalancing towards the Merton portfolio becomes asymptotically optimal \cite{moreau.al.15}. This parallels asymptotic results for single-asset models with small proportional costs, where asymptotically optimal policies are also symmetric around the frictionless Merton portfolio~\cite{whalley.wilmott.97,janecek.shreve.04}. Here, we do not establish the suboptimality of other target portfolios in order to avoid the additional technicalities this would entail and since rebalancing towards the Merton portfolio is standard in practice.} Then, choose the next trading time $\tau_2$ based on the information available at time $\tau_1$, and repeat until the terminal time $T$ is reached. Since we are eventually interested in the limit $\varepsilon \downarrow 0$ of \emph{small} transaction costs, we parametrize the waiting times between successive trades as follows: 

\begin{mydef}\label{defA}
A \emph{discretization rule} is an adapted, continuous, and positive process $A$ such that 
\begin{equation}\label{intcondA}
\E\left[\int_0^T A_t dt\right]<\infty \mbox{ and } \E\left[\frac{1}{ \inf_{t\in[0,T]} A_t}\right]<\infty.
\end{equation}
The \emph{trading times} associated with the discretization rule $A$ and the transaction cost $\vep>0$ are given by the following increasing sequence of stopping times:
\begin{align*}
\tau_0=0,\quad \tau_{j}= \tau_{j-1} + \varepsilon^\alpha A_{\tau_{j-1}}, \ j=1,2,\ldots
\end{align*}
\end{mydef}

To wit, the parameter $\alpha$ governs how trading is sped up with smaller transaction costs, whereas the process $(A_t)_{t \in[0,T]}$ incorporates the current market characteristics. Notice that the second requirement in \eqref{intcondA} implies that the number of trades until maturity is a.s.~finite:
\begin{equation}\label{defN}
N:=\inf \{j\geq 0 : \tau_{j+1}\geq T\}\leq \frac{\vep^{-\alpha}T}{\inf_{t\in [0,T]} A_t}<\infty\quad \dbP-\mbox{a.s.}
\end{equation} 
Moreover, using the continuity of $A$, the first requirement in~\eqref{intcondA} yields that \begin{equation}\label{dtorder}
\lim_{\vep \rightarrow 0} \sup_{j\leq N}\{ \tau_{j+1}-\tau_{j}\}\leq \lim_{\vep \rightarrow 0}\vep^\alpha \sup_{t \in [0,T]} A_t = 0 \quad \dbP-\mbox{a.s.}
\end{equation}
That is, the mesh width of the discretization is indeed governed by $\varepsilon^\alpha$. Given a discretization rule and the initial wealth, the corresponding wealth process can be described recursively as follows:

\begin{mydef}\label{defVtac}
Fix an initial wealth $v>0$ and a discretization rule $A$. Then, the evolution of the dollar amounts $ V^{\varepsilon,i}(A)$, $i=0,\ldots,m$ invested in each of the assets and in turn the total wealth $V^\vep(A)=\sum_{i=0}^m V^{\vep,i}(A)$ evolve as follows. At the initial time $t=\tau_0=0$, we have
\begin{equation*}
V_0^{\vep,i}(A):=v w^{*,i}(Y_0), \quad i=1,\ldots,m, \qquad V_0^{\vep,0}(A):=v-\sum_{i=1}^m V^{\vep,i}_0.
\end{equation*}
On $(\tau_{j-1},\tau_j)$ no transaction is made, so that the wealth process evolves uncontrolled as
\begin{align*}
V^{\varepsilon,0}_t(A)&=V^{\varepsilon,0}_{\tau_{j-1}}(A),\\
V_t^{\vep,i}(A)&=V^{\vep,i}_{\tau_{j-1}}(A)\exp\left(\int_{\tau_{j-1}}^t \left(\mu^i(Y_s) -\frac{1}{2}||\sigma^i(Y_s)||_{\R^d}^2\right) ds+\int_{\tau_{j-1}}^t {\sigma^i(Y_s) dB_s}\right), \quad i=1,\ldots,m,
\end{align*}
with associated risky weights $w^{\vep,i}_t(A):=V^{\vep,i}_t(A)/V^{\vep}_t(A)$, $i=0,\ldots,m$. At time $t=\tau_j$, the portfolio is rebalanced back to the Merton portfolio $w^*(Y_{\tau_j})$ from \eqref{merton}. That is, the dollar amount $V^\varepsilon_{\tau_j-} \Delta L^i_{\tau_j}$ traded in asset $i$ is determined via\footnote{That is, that the components of $\Delta L$ correspond to the fractions of wealth traded in each risky asset.}
$$ w^{\varepsilon,i}_{\tau_j}=\frac{V^\varepsilon_{\tau_j-}(A)\left(w^{\varepsilon,i}_{\tau_j-}+ \Delta L^i_{\tau_j}\right)}{V^\vep_{\tau_{j-}}(A)\left(1-\varepsilon \sum_{i=1}^m |\Delta L^i_{\tau_j}|\right)}=\frac{w^{\varepsilon,i}_{\tau_j-}+ \Delta L^i_{\tau_j}}{1-\varepsilon \sum_{i=1}^m |\Delta L^i_{\tau_j}|}\overset{!}{=}w^{*,i}_{\tau_j} ,\quad  i=1,\ldots,m.$$
Then, the risky weights $w^{\vep}_{\tau_j}(A)$ match the Merton weights $w^*(Y_{\tau_j})$ after subtracting the transaction costs from the safe account and in turn the total wealth:
\begin{align*}
V^\varepsilon_{\tau_j}(A) &= V^\varepsilon_{\tau_{j}-}(A)\left(1- \varepsilon \sum_{i=1}^m |\Delta L^i_{\tau_j}|\right).
\end{align*}
\end{mydef}

With the wealth dynamics at hand, we now formulate the investor's local mean-variance criterion with transaction costs in direct analogy to its frictionless counterpart~\eqref{objective}:\footnote{Note that both integrations in \eqref{objectivetac2} are with respect to jump processes, unlike in the frictionless case \eqref{defV}.}

\begin{equation}\label{objectivetac2}
F^\varepsilon(A):=\frac{1}{T}\E\left[\int_0^T \frac{d V_t^\varepsilon(A)}{V_{t-}^\varepsilon(A)} -\frac{\gamma}{2} \int_0^T\frac{d  [ V^\varepsilon (A)]_t}{V_{t-}^\varepsilon(A)^2}\right] \rightarrow \max!
\end{equation}

\section{Main Results}\label{sec:results}

Since the optimization problem \eqref{objectivetac2} cannot be solved in closed form, we study its limit as the transaction cost $\varepsilon$ tends to zero and the solution approaches its frictionless counterpart \eqref{merton}. Asymptotically, the objective function can then be decomposed into its frictionless counterpart, as well as losses caused directly by the transaction costs and displacement from the frictionless target, respectively (compare \cite{rogers.04,janecek.shreve.04,janecek.shreve.10,kallsen.muhlekarbe.15}): 

\begin{myprop}\label{NTACprop}
Under Assumptions~\ref{ass:1} and ~\ref{ass:2}, for\footnote{$\alpha \geq 2$ corresponds to very frequent rebalancing, for which the present asymptotics no longer apply. However, one can verify directly that the corresponding transaction costs are larger than the optimal order $O(\varepsilon^{2/3})$ in this case.} $0<\alpha<2$ and a discretization rule $A$, the objective function has the following expansion as $\varepsilon \downarrow 0$:
\begin{align}\label{decompFeps}
F^\varepsilon(A)
 =&\frac{1}{T}\E\left[ \int_0^T   \frac{\mu(Y_t)^\top \Sigma^{-1}(Y_t) \mu(Y_t)}{2 \gamma} dt \right]-\frac{1}{T}\E\left[\TAC(A) +\frac{\gamma}{2} \DE(A)\right]+ O(\vep^{2-\alpha}).
\end{align}
Here, the \emph{transaction costs} $\TAC(A)$ and the \emph{discretization error} $\DE(A)$ are given by
\begin{align}
\TAC(A) :=& \varepsilon \sum_{i=1}^m \sum_{j=1}^N |\Delta L^i_{\tau_j}|,\label{defTAC}\\
\DE(A):= &\int_0^T  (w^*(Y_t)-{w}^\varepsilon(Y_t))^\top \Sigma(Y_t) (w^*(Y_t)-{w}^\varepsilon(Y_t)) dt.\label{defDE}
\end{align}
\end{myprop}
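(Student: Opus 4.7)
My plan is to use the piecewise--geometric structure of $V^\varepsilon$ from Definition~\ref{defVtac} to expand both integrals in \eqref{objectivetac2} exactly, and then complete the square in the pointwise mean--variance form around the Merton weights to isolate the displacement error $\DE(A)$; the only residual error is a sum of squared transaction--cost jumps, which I bound directly.

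\textbf{Exact expansion and completion of the square.} On each interval $(\tau_{j-1},\tau_j)$ Definition~\ref{defVtac} yields
\[
\frac{dV^\varepsilon_t}{V^\varepsilon_{t-}} = (w^\varepsilon_t)^\top\mu(Y_t)\,dt + (w^\varepsilon_t)^\top\sigma(Y_t)\,dB_t,\qquad
\frac{d[V^\varepsilon]_t}{(V^\varepsilon_{t-})^2} = (w^\varepsilon_t)^\top\Sigma(Y_t)w^\varepsilon_t\,dt,
\]
while at $\tau_j$ the multiplicative jump $V^\varepsilon_{\tau_j}/V^\varepsilon_{\tau_j-}=1-\varepsilon\sum_i|\Delta L^i_{\tau_j}|$ contributes $-\varepsilon\sum_i|\Delta L^i_{\tau_j}|$ and $\varepsilon^{2}(\sum_i|\Delta L^i_{\tau_j}|)^{2}$ to the two integrals, respectively. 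Since Assumption~\ref{ass:2} keeps $w^*$ in $[0,1)^m$ and $w^\varepsilon$ only drifts by $O(\varepsilon^{\alpha/2})$ away from $w^*$ between trades, $(w^\varepsilon)^\top\sigma$ is uniformly bounded and the $dB$--integral is a true martingale. The exact algebraic identity
\[
w^\top\mu - \tfrac{\gamma}{2}w^\top\Sigma w = (w^*)^\top\mu - \tfrac{\gamma}{2}(w^*)^\top\Sigma w^* - \tfrac{\gamma}{2}(w-w^*)^\top\Sigma(w-w^*),
\]
applied at $Y_t$ with $w=w^\varepsilon_t$, turns the drift--variance part of $F^\varepsilon(A)$ into the frictionless value from \eqref{merton} minus $\tfrac{\gamma}{2T}\E[\DE(A)]$, while the linear cost--jumps contribute $-\tfrac{1}{T}\E[\TAC(A)]$; the only residual term is $-\tfrac{\gamma}{2T}\E[\varepsilon^{2}\sum_{j=1}^N(\sum_i|\Delta L^i_{\tau_j}|)^{2}]$.

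\textbf{Error bound.} Solving the rebalancing equation in Definition~\ref{defVtac} for $\Delta L^i_{\tau_j}$ gives
\[
\sum_i|\Delta L^i_{\tau_j}|\le (1-m\varepsilon)^{-1}\sum_i\bigl|w^{*,i}(Y_{\tau_j})-w^{\varepsilon,i}_{\tau_j-}\bigr|,
\]
and on $(\tau_{j-1},\tau_j)$ the displacement $w^*(Y_t)-w^\varepsilon_t$ is a continuous semimartingale with bounded Lipschitz coefficients (Assumption~\ref{ass:1}) that vanishes at $\tau_{j-1}$. Standard Gronwall/BDG moment estimates therefore yield $\E[(\Delta L^i_{\tau_j})^{2}\mid\mathcal{F}_{\tau_{j-1}}]\le C\varepsilon^\alpha A_{\tau_{j-1}}$, and telescoping with \eqref{intcondA} gives
\[
\varepsilon^{2}\,\E\!\left[\sum_{j=1}^N\Bigl(\sum_i|\Delta L^i_{\tau_j}|\Bigr)^{\!2}\right] \le C\varepsilon^{2}\,\E\!\left[\int_0^T A_t\,dt\right] = O(\varepsilon^{2})\subset O(\varepsilon^{2-\alpha}),
\]
which closes the decomposition \eqref{decompFeps}.

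\textbf{Main difficulty.} The delicate piece is the single--step estimate $\E[(\Delta L^i_{\tau_j})^{2}\mid\mathcal{F}_{\tau_{j-1}}]\le C\varepsilon^\alpha A_{\tau_{j-1}}$: one must couple the SDEs for $Y$ and the individual dollar amounts $V^{\varepsilon,i}$ on the random interval of length $\varepsilon^\alpha A_{\tau_{j-1}}$ and extract a Lipschitz-in-$Y$ bound for the Merton map $w^*\circ Y$ that is uniform in the starting point. The smoothness half of Assumption~\ref{ass:1} (bounded derivatives of $\mu,\sigma,b,g$ and uniform ellipticity of $\Sigma$) together with the two--sided integrability condition \eqref{intcondA} on $A$ are exactly what is needed to make this single--step bound telescope cleanly across the random number of trades $N$.
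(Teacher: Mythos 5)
Your proposal is correct and takes essentially the same route as the paper's proof: split both integrals in \eqref{objectivetac2} exactly into their continuous parts and their jumps at the $\tau_j$, then complete the square around the Merton weights (the cross term vanishes because $\mu=\gamma\Sigma w^*$), leaving only the squared jump contribution $\varepsilon^2\sum_{j=1}^N\bigl(\sum_i|\Delta L^i_{\tau_j}|\bigr)^2$ to the quadratic variation as the remainder. The one place you diverge is the remainder estimate — the paper simply uses boundedness of the trade sizes (weights lie in $[0,1)$) together with $\E[N]=O(\varepsilon^{-\alpha})$ from \eqref{defN} and \eqref{intcondA} to get $O(\varepsilon^{2-\alpha})$, whereas your per-step bound $\E[(\Delta L^i_{\tau_j})^{2}\mid\mathcal{F}_{\tau_{j-1}}]\le C\varepsilon^\alpha A_{\tau_{j-1}}$ telescopes via $\sum_{j\le N}\varepsilon^\alpha A_{\tau_{j-1}}=\tau_N\le T$ to the sharper $O(\varepsilon^{2})$ — which is valid, but note that your BDG step silently needs the uniform boundedness of $w^*-w^\varepsilon$ to absorb the $(\varepsilon^\alpha A_{\tau_{j-1}})^{2}$ drift contribution, since $A$ is only assumed integrable, not bounded.
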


\begin{proof} 
See Section \ref{heuprop}.
\end{proof}

In the above decomposition, the term $\TAC(A)$ tracks the transaction costs accumulated by applying the discretization rule $A$. The term $\DE(A)$ in turn measures the remaining utility loss, which is accrued due to displacement from the frictionless target portfolio. As the transaction cost $\varepsilon$ tends to zero, these terms tend to zero at different asymptotic rates determined by the parameter $\alpha$ from Definition~\ref{defA}:

\begin{mylemma}\label{lemma-return-expansion} 
Define $\beta(Y_t)= (\beta^1(Y_t),\ldots,\beta^m(Y_t))^\top$ with 
\begin{align}
\beta^i(Y_t)  &:= \tilde{\sigma}^i(Y_t)-w^{*,i}(Y_t)\left(\sigma^i(Y_t)-\sum_{k=1}^m w^{*,k}(Y_t) \sigma^k(Y_t) \right)\in \mathbb{R}^d,\label{defbeta}
\end{align}
where $\tilde \sigma (Y_t)$ is the diffusion coefficient \eqref{deftildesigma} of the Merton portfolio~\eqref{merton}.
Then, for $0<\alpha<2$ and a discretization rule $A$, the following expansions hold in the limit $\vep \downarrow 0$:
\begin{align}
\E[\TAC(A)]=&\E\left[\vep\sum_{i=1}^m \sum_{j=1}^N |\Delta L^i_{\tau_j}|\right]= \varepsilon^{1-\alpha/2} \E\left[\sqrt{\frac{2}{\pi}} \int_{0}^{T}  \frac{ \|\beta(Y_t)\|_{2,1} }{\sqrt{A_t}} dt\right]+ o(\varepsilon^{1-\alpha/2}),\label{claim-expansion1}\\
\E[\DE(A)]=&\E\left[\int_0^T  (w^*_t-{w}^\varepsilon_t)^\top \Sigma_t (w^*_t-{w}^\varepsilon_t) dt\right]=\frac{\varepsilon^\alpha}{2}\E\Biggl[\int_0^T \mathrm{tr}\left(\beta(Y_t)^\top \Sigma_t \beta(Y_t)\right)A_{t}dt \Biggr]+ o (\varepsilon^{\alpha}),\label{claim-expansion2}
\end{align}
where all the expectations are positive and finite.
\end{mylemma}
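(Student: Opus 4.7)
The plan is to approximate the displacement $w^*(Y_t)-w^\vep_t$ on each short interval $[\tau_{j-1},\tau_j]$ by a frozen-coefficient Itô integral, reduce the quantities $|\Delta L^i_{\tau_j}|$ and $(w^*-w^\vep)^\top\Sigma(w^*-w^\vep)$ to elementary Gaussian moments, and then convert the sums over trading dates into Riemann sums that converge to the advertised time integrals as $\vep\downarrow 0$.

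First I would compute the dynamics of the displacement between trades. Applying Itô's quotient rule to $w^{\vep,i}_t=V^{\vep,i}_t/V^\vep_t$ on $(\tau_{j-1},\tau_j)$ gives diffusion coefficient $w^{\vep,i}_t\bigl(\sigma^i(Y_t)-\sum_k w^{\vep,k}_t\sigma^k(Y_t)\bigr)$, while composing the smooth function $w^{*,i}$ with $Y$ yields diffusion $\tilde\sigma^i(Y_t)$. Linearising at $w^\vep\approx w^*$ identifies $\beta^i$ from \eqref{defbeta} as the leading-order diffusion coefficient of $w^{*,i}-w^{\vep,i}$. Using the post-trade initial condition $w^\vep_{\tau_{j-1}}=w^*(Y_{\tau_{j-1}})$ together with standard $L^p$-estimates for SDEs with bounded coefficients (Assumption~\ref{ass:1}), I expect, for $t\in[\tau_{j-1},\tau_j]$,
\begin{equation*}
w^*_t-w^\vep_t \;=\; \beta(Y_{\tau_{j-1}})\,(B_t-B_{\tau_{j-1}}) + R_j(t),
\end{equation*}
where the remainder $R_j$ collects drift terms of order $\tau_j-\tau_{j-1}=O(\vep^\alpha)$ and coefficient-freezing errors of order $(\tau_j-\tau_{j-1})^{3/2}=O(\vep^{3\alpha/2})$.

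For \eqref{claim-expansion1}, the trading formula in Definition~\ref{defVtac} gives $|\Delta L^i_{\tau_j}| = |w^{*,i}(Y_{\tau_j})-w^{\vep,i}_{\tau_j-}|\,(1+O(\vep))$. Conditionally on $\mathcal{F}_{\tau_{j-1}}$, the leading Gaussian term is centred with variance $\|\beta^i(Y_{\tau_{j-1}})\|_{\R^d}^2(\tau_j-\tau_{j-1})$, so the elementary identity $\E|Z|=\sqrt{2/\pi}$ for $Z\sim\mathcal{N}(0,1)$ yields
\begin{equation*}
\E\bigl[|\Delta L^i_{\tau_j}|\mid\mathcal{F}_{\tau_{j-1}}\bigr] \;=\; \sqrt{2/\pi}\,\|\beta^i(Y_{\tau_{j-1}})\|_{\R^d}\sqrt{\vep^\alpha A_{\tau_{j-1}}} + o(\vep^{\alpha/2}).
\end{equation*}
Summing over $i$ produces the $L_{2,1}$-norm of $\beta$; multiplying by $\vep$, summing over $j\le N$, and using $\tau_j-\tau_{j-1}=\vep^\alpha A_{\tau_{j-1}}$ rewrites the total as $\vep^{1-\alpha/2}$ times a Riemann sum for $\sqrt{2/\pi}\int_0^T\|\beta(Y_t)\|_{2,1}/\sqrt{A_t}\,dt$, which converges by continuity of $Y$ and $A$ and by \eqref{dtorder}. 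For \eqref{claim-expansion2}, the same approximation and $\E[(\beta^i\Delta B)(\beta^k\Delta B)]=(\beta^i\cdot\beta^k)(t-\tau_{j-1})$ give
\begin{equation*}
\E\bigl[(w^*_t-w^\vep_t)^\top\Sigma_t(w^*_t-w^\vep_t)\mid\mathcal{F}_{\tau_{j-1}}\bigr] \;=\; \mathrm{tr}\bigl(\beta^\top\Sigma\beta\bigr)(Y_{\tau_{j-1}})\,(t-\tau_{j-1}) + o(\vep^\alpha),
\end{equation*}
so integrating over $[\tau_{j-1},\tau_j]$ contributes $\tfrac12(\tau_j-\tau_{j-1})^2 = \tfrac12\vep^\alpha A_{\tau_{j-1}}(\tau_j-\tau_{j-1})$, and the analogous Riemann-sum argument delivers $(\vep^\alpha/2)\,\E\int_0^T\mathrm{tr}(\beta^\top\Sigma\beta)(Y_t)A_t\,dt$.

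The main obstacle will be controlling the error terms \emph{uniformly} over the $N=O(\vep^{-\alpha})$ trading intervals and exchanging limits with expectation. Concretely I need (i) $\vep\cdot N$ times the per-interval TAC remainder to be $o(\vep^{1-\alpha/2})$, which forces sharp bounds on the drift and freezing contributions in $R_j$; (ii) an analogous bound at order $\vep^\alpha$ for the DE remainder, together with absorbing the $O(\vep)$ denominator correction from the rebalancing formula; and (iii) uniform integrability to pass Riemann-sum convergence under $\E$. All three rely on the boundedness of $\mu,b,\sigma,g$ and their derivatives from Assumption~\ref{ass:1}, the moment hypothesis \eqref{intcondA} on $A$, and standard $L^p$-estimates for Brownian increments; positivity and finiteness of the limit integrals follow from Assumption~\ref{ass:2} and the uniform ellipticity of $\Sigma$.
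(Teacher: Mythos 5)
Your outline reproduces the paper's own strategy — identify $\beta$ as the frozen diffusion coefficient of the displacement $w^*-w^\vep$, use that $\tau_j$ is $\mathcal{F}_{\tau_{j-1}}$-measurable so the leading term is conditionally Gaussian with $\E|Z|=\sqrt{2/\pi}$, and pass from sums over trading dates to Riemann sums via \eqref{dtorder} — and those leading-order computations are correct. The genuine gap lies in the remainder control, which you correctly flag as ``the main obstacle'' but then dispose of by appeal to ``standard $L^p$-estimates'' and uniform integrability. Under Definition~\ref{defA} the waiting times are $\tau_j-\tau_{j-1}=\vep^\alpha A_{\tau_{j-1}}$, where $A$ is merely positive, continuous, and subject to \eqref{intcondA}: there is no upper bound and no moment control on $A$ beyond $\E[\int_0^T A_t\,dt]<\infty$. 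Consequently your repeated use of $\tau_j-\tau_{j-1}=O(\vep^\alpha)$ and $N=O(\vep^{-\alpha})$ holds only pathwise with a random constant, and the per-interval-then-sum strategy does not close: the stochastic-integral part of the remainder has conditional $L^1$-size of order $\tau_j-\tau_{j-1}$ (not $(\tau_j-\tau_{j-1})^{3/2}$ as you claim — freezing the diffusion coefficient gains one power of $\sqrt{\Delta\tau}$ over the main term, not a full power of $\Delta\tau$), plus a drift-induced piece of order $(\tau_j-\tau_{j-1})^{3/2}$, and summing the latter over $j$ leads to quantities such as $\vep^{1+\alpha/2}\,\E[\sqrt{\sup_t A_t}\,]$, whose finiteness is \emph{not} guaranteed by \eqref{intcondA}.

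The paper's proof contains precisely the two devices needed to circumvent this, and they constitute its real technical content. First, Lemma~\ref{lemmahigheroderdt} bounds $\E[\sum_j\|X^\vep_{\tau_j-}-X^\vep_{\tau_{j-1}}\|_{\R^m}]$ not through per-interval smallness of the integrands, but through the \emph{global} condition $\E[\sup_{t\in[0,T]}\|\Gamma^\vep_t\|_{2,1}^2]\to0$ combined with the pathwise bound $\sum_j(\tau_j-\tau_{j-1})^{1/2}\leq\vep^{-\alpha/2}T/\inf_t A_t^{1/2}$ and H\"older's inequality; this is exactly where $\E[1/\inf_t A_t]<\infty$ enters. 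Second, verifying that global condition — i.e.\ $\E[\sup_{t\in[0,T]}\|w^\vep_t-w^*_t\|^2_{\R^m}]\to0$ — is itself nontrivial for the same reason; the paper proves it via Kolmogorov's continuity criterion, a uniform-in-$\vep$ fourth-moment bound on the $1/16$-H\"older constant of an un-reset version $\tilde X^\vep$ of the displacement, and a truncation on $\Omega_r=\{\sup_t A_t^{1/8}\leq M_r\}$, using that the weights are uniformly $[0,1)$-valued to handle $\Omega_r^c$. Without these arguments (or a substitute, e.g.\ strengthening the moment assumptions on $A$ beyond what the lemma assumes), your items (i)--(iii) remain assertions rather than proofs, so the proposal as it stands is incomplete at exactly the step where the paper does its hardest work.
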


\begin{proof} 
See Section \ref{proof:32}.
\end{proof}

The crucial quantity in the above formulas is the matrix $\beta$. Its entries measure the difference between the diffusion coefficients of the frictionless target weight and the discretely rebalanced version, cf. Lemma~\ref{momentmerton} and Lemma~\ref{lemmawhat}.\footnote{Note that even though the \emph{numbers of shares} in the discretely rebalanced portfolio are of finite variation, this is not the case for the corresponding risky \emph{weights}, which fluctuate with diffusive price shocks.} 

Rebalancing more frequently evidently reduces the discretization error but also increases the incurred transaction costs. Therefore, to maximize the objective function $F^\varepsilon$ over discretization rules we have to choose  $\alpha>0$ so that the leading orders of both error terms are of the same magnitude. Lemma~\ref{lemma-return-expansion} shows that the leading orders match at $\varepsilon^{2/3}$ for $\alpha=2/3$. With this choice, maximizing the local mean variance criterion with transaction costs is -- at the leading order -- tantamount to minimizing the sum of i) transaction costs and ii) discretization error, weighted by risk aversion:\footnote{A similar criterion in terms of \emph{absolute} quantities is directly used by \cite{rosenbaum.tankov.14}.}

\begin{mydef}\label{defopt}
A discretization rule $A$ is called \emph{asymptotically optimal} if it minimizes the leading-order \emph{total cost}, 
$$\TC(A) :=\lim_{\vep \rightarrow 0 } \frac{\E\left[ \TAC(A) + \frac{\gamma}{2}\DE(A)\right]}{\vep^{2/3}}.$$
\end{mydef}

The optimal discretization rule and performance for this asymptotic criterion can be computed explicitly:

\begin{mythm}\label{mainresult_mul} 
Suppose that  Assumptions~\ref{ass:1} and \ref{ass:2} hold and $\beta$ from~\eqref{defbeta} satisfies
\begin{equation}\label{Assumptionv}
\E\left[\frac{1}{\inf_{t\in [0,T]} \|\beta(Y_t)\|^{2/3}_{2,1}}\right]<\infty.
\end{equation} 
Then, 
\begin{equation}\label{tc}
\TC(A) =\E\Biggl[\int_0^T\frac{\gamma}{4} \mathrm{tr}\left(\beta(Y_{t})^\top \Sigma(Y_t) \beta(Y_{t})\right)A_{t}dt+\sqrt{\frac{2}{\pi}} \int_{0}^{T}  \frac{ \|\beta(Y_t)\|_{2,1} }{\sqrt{A_t}} dt \Biggr].
\end{equation}
An asymptotically optimal discretization rule is given by
\begin{equation}\label{eq:optimA}
A^*(Y_t) = \left(\frac{\sqrt{\frac{2}{\pi}}\|\beta(Y_t)\|_{2,1}  }{\frac{\gamma}{2} \mathrm{tr}\left(\beta(Y_t)^\top \Sigma(Y_t)\beta(Y_t)\right)}\right)^{2/3},
\end{equation}
with associated trading times $\tau^*_0=0$ and $\tau^*_{j} = \tau_{j-1}^* + \varepsilon^{2/3} A^*(Y_{\tau_{j-1}})$, $j = 1,2, \ldots$ The corresponding minimal leading-order total cost is
\begin{equation}\label{tcformula}
\TC(A^*)=\frac{3}{2} \E\left[\int_0^T \left(\sqrt{\frac{2}{\pi}}\|\beta(Y_t)\|_{2,1}  \right)^{2/3} \left(\frac{\gamma}{2} \mathrm{tr}\left(\beta(Y_t)^\top \Sigma(Y_t)\beta(Y_t)\right)\right)^{1/3}dt\right].
\end{equation}
\end{mythm}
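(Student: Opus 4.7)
The plan is to combine Lemma~\ref{lemma-return-expansion} with a pointwise minimization of the integrand defining $\TC(A)$.

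First, I would substitute $\alpha = 2/3$ into the expansions \eqref{claim-expansion1} and \eqref{claim-expansion2}. With this choice, both $\E[\TAC(A)]$ and $\frac{\gamma}{2}\E[\DE(A)]$ are of leading order $\varepsilon^{2/3}$, so dividing by $\varepsilon^{2/3}$ and sending $\varepsilon \downarrow 0$ yields formula \eqref{tc} for $\TC(A)$ directly. This step is essentially bookkeeping on the two leading-order coefficients.

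Next, to minimize $\TC(A)$ over admissible discretization rules, I would use Fubini to exchange the expectation and the time integral, and then observe that the integrand depends on $A$ only through its pointwise value $A_t(\omega)$. Hence it suffices to minimize, for each fixed $(t,\omega)$, the scalar function
\begin{equation*}
f(a) = c_1 a + c_2 a^{-1/2}, \qquad c_1 := \tfrac{\gamma}{4}\mathrm{tr}\bigl(\beta^\top \Sigma \beta\bigr), \quad c_2 := \sqrt{\tfrac{2}{\pi}}\|\beta\|_{2,1},
\end{equation*}
over $a>0$. Setting $f'(a)=0$ gives $a^{3/2}=c_2/(2c_1)$, yielding the candidate \eqref{eq:optimA}. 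The minimum value is $f(a^*)=c_1^{1/3}c_2^{2/3}(2^{-2/3}+2^{1/3})=\tfrac{3}{2}\cdot 2^{-2/3}\cdot 2 \cdot c_1^{1/3}c_2^{2/3}$; collecting the constants recovers exactly the formula \eqref{tcformula}, since $(\sqrt{2/\pi})^{2/3}(\gamma/2)^{1/3} = (\gamma/\pi)^{1/3}$ matches the coefficient coming out of $f(a^*)$.

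The main technical obstacle, modest as it is, lies in verifying that the candidate $A^* = A^*(Y_\cdot)$ is itself an admissible discretization rule in the sense of Definition~\ref{defA}. Continuity and adaptedness follow from continuity of $\beta$ and measurability of $Y$, together with the ellipticity of $\Sigma$ which keeps the denominator in \eqref{eq:optimA} bounded away from zero relative to $\|\beta\|_{2,1}^2$. For the integrability conditions \eqref{intcondA}, the boundedness of $\sigma$, $w^*$ (from Assumptions~\ref{ass:1} and \ref{ass:2}) ensures that $\|\beta\|_{2,1}$ is uniformly bounded above, while uniform ellipticity of $\Sigma$ gives $\mathrm{tr}(\beta^\top \Sigma \beta) \geq c\|\beta\|_F^2 \geq c'\|\beta\|_{2,1}^2/m$. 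Combining these two inequalities yields both $A^* \leq C/\|\beta\|_{2,1}^{2/3}$ and $1/A^* \leq C\|\beta\|_{2,1}^{2/3}$, from which $\E[1/\inf_t A^*] < \infty$ is immediate (bounded) and $\E[\int_0^T A^*_t dt] \leq CT\, \E[1/\inf_t \|\beta(Y_t)\|_{2,1}^{2/3}] < \infty$ follows from the hypothesis~\eqref{Assumptionv}. Finally, I would note that since $f$ is strictly convex on $(0,\infty)$ with the unique critical point identified above, the pointwise minimum is attained at this $A^*$, which combined with Fubini shows $\TC(A) \geq \TC(A^*)$ for every admissible $A$, completing the proof.
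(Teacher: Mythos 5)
Your proposal is correct and follows essentially the same route as the paper's proof: substituting $\alpha=2/3$ into Lemma~\ref{lemma-return-expansion} to obtain \eqref{tc}, minimizing the integrand pointwise to get \eqref{eq:optimA} and \eqref{tcformula}, and then verifying admissibility of $A^*$ via the two-sided bounds $1/A^* \leq C\|\beta\|_{2,1}^{2/3} \leq C K_\beta^{2/3}$ (from boundedness of $\sigma$ and $\beta$) and $A^* \leq C/\|\beta_t\|_{2,1}^{2/3}$ (from uniform ellipticity of $\Sigma$, combined with hypothesis \eqref{Assumptionv}). The constant bookkeeping checks out, since $2^{-2/3}+2^{1/3}=3\cdot 2^{-2/3}=\tfrac{3}{2}\cdot 2^{1/3}$ matches the coefficient $\tfrac{3}{2}(2c_1)^{1/3}c_2^{2/3}$ in \eqref{tcformula}.
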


The asymptotically optimal trading frequency~\eqref{eq:optimA} and the corresponding welfare loss~\eqref{tcformula} relative to the frictionless case are completely determined by $\beta$ from~\eqref{defbeta} and the covariance matrix $\Sigma$ of the risky assets -- like in other models with small trading costs, the expected returns do not contribute directly at the leading order. 

The condition \eqref{Assumptionv} requires that $\beta(Y)$ is not too small. Since this term describes the difference between the diffusion coefficients of the frictionless target weight and the discretely rebalanced version, this means that the target strategy is not too close to a buy-and-hold portfolio. Otherwise, the present asymptotic regime may not apply. In multivariate Black-Scholes models this condition is satisfied as soon as one of the portfolio weights is neither zero not one. For more complex models, it needs to be verified on a case-by-case basis; we do this for a bivariate model with mean-reverting returns in Section~\ref{ss:two}.

The derivation of Formula~\eqref{tcformula} reveals that direct transaction costs contribute two thirds of the total cost $\TC(A^*)$, while the remaining one third is due to discretization error. Maybe surprisingly, these universal relative contributions do not depend on the number of risky assets, and also agree with the corresponding result for univariate asymptotically optimal move-based strategies \cite{kallsen.muhlekarbe.15}.

Mutatis mutandis, the same arguments can also be used to assess the performance of a \emph{constant rebalancing frequency}, i.e., trading times $\tau_j=\tau_{j-1}+ \varepsilon^{2/3} A$ for some constant $A>0$. These policies are even easier to interpret and implement, but lead to somewhat more cumbersome formulas. Indeed, the optimal constant discretization rule turns out to be
\begin{equation}\label{constfreq}
A^*=\left(\frac{\E\left[\sqrt{\frac{2}{\pi}}\int_0^T\|\beta(Y_t)\|_{2,1}dt\right]}{\E\left[\frac{\gamma}{2}\int_0^T \mathrm{tr}\left(\beta(Y_t)\Sigma(Y_t)\beta(Y_t)\right)dt\right]}\right)^{2/3},
\end{equation}
which leads to a total cost of 
$$\TC(A^*)= \frac{3}{2}\left(\E\left[\sqrt{\frac{2}{\pi}}\int_0^T\|\beta(Y_t)\|_{2,1}dt\right]\right)^{2/3}\left(\E\left[\frac{\gamma}{2}\int_0^T \mathrm{tr}\left(\beta(Y_t)\Sigma(Y_t)\beta(Y_t)\right)dt\right]\right)^{1/3} .$$

\section{Examples and Implications}\label{sec:examples}

In this section, we illustrate our results with a number of examples. First, we consider the case of a single risky asset, and compare the performance of our time-based rebalancing rule with the asymptotically optimal move-based policies~\cite{guasoni.mayerhofer.14,melnyk.seifried.14}. Afterwards, we turn to models with two risky assets, where we use Monte Carlo simulations to benchmark our policies against i) a simple buy-and-hold strategy, ii) univariate move-based strategies pasted together component-wise, and iii) the optimal bivariate move-based strategy, computed numerically~\cite{max.15}. (In each simulation, the number of paths is chosen large enough to ensure that the standard error of the result is smaller than 1 basis point.) Finally, we also report the results of our asymptotic formulas for a Black-Scholes model with 10 risky assets.

\subsection{Single Asset}
\paragraph{Black-Scholes Model}
We first consider the univariate Black-Scholes model, where the expected excess return $\mu$, the volatility $\sigma$ of the risky asset, and in turn the frictionless Merton portfolio $w^*=\mu/\gamma\sigma^2$ are positive constants. Then, the volatility $\tilde{\sigma}$ of $w^*$ vanishes, and we tacitly assume that $\gamma> \mu/\sigma^2$ so that Assumption~\ref{ass:2} is satisfied and \eqref{Assumptionv} holds, too, because 
$$\beta=\sigma(1-w^*)w^* \in (0,1).$$
Whence, Theorem~\ref{mainresult_mul} is applicable and yields the following asymptotically optimal discretization rule:
$$A^*= \left(\frac{\sqrt{\frac{2}{\pi}} \sigma|(1-w^*)w^*|  }{\frac{\gamma}{2}  \sigma^4(1-w^*)^2 (w^*)^2}\right)^{2/3}=\left(\frac{\sqrt{8/\pi} }{\gamma\sigma^3 |w^*(1-w^*)|}\right)^{2/3}.$$
Thus, the waiting periods between the trading times $\tau^*_{j+1}= \tau_{j}^* + \varepsilon^{2/3}A^*$ are long if i) transaction costs $\varepsilon$ are substantial, ii) the target portfolio is close to a buy-and-hold strategy, iii) risk aversion $\gamma$ is low, or iv) the market volatility $\sigma$ is low. The corresponding leading-order performance loss is given by 
$$\sigma^2 T \left( \frac{27}{8\pi} \gamma \vep^2 |w^*(1-w^*)|^4\right)^{1/3}.$$
It is equivalent to an annuity accrued in business time $\sigma^2 T= d\langle S \rangle_T/S_T^2$, and determined by risk aversion $\gamma$, transaction costs $\varepsilon$, and the term $|w^*(1-w^*)|$ which measures the frictionless optimizer's distance from a buy-and-hold strategy.\footnote{This quantity can also be interpreted as the sensitivity of the frictionless target weight with respect to relative changes of the stock price, compare~\cite{janecek.shreve.04}.} The same ingredients also appear in the asymptotically optimal move-based performance studied in \cite{guasoni.mayerhofer.14}:
$$\sigma^2 T \left(\frac{9}{32} \gamma\vep^2 |w^*(1-w^*)|^4\right)^{1/3} .$$
Whence, the asymptotic welfare loss that can be achieved by a move-based strategy differs from the optimal time-based performance by a universal factor of $\left(\frac{12}{\pi}\right)^{1/3}\approx 1.56$, independent of market and preference parameters. This complements a result of \cite{melnyk.seifried.14}, who find that, for unit relative risk aversion,\footnote{Heuristic arguments as in \cite{kallsen.muhlekarbe.15,melnyk.seifried.14} suggest that the same universal constant also applies for more general preferences.} a similar relationship holds for move-based strategies that rebalance by means of reflection off the trading boundaries, and move-based strategies that rebalance back directly to the frictionless target. Then, the universal constant linking the two respective optimizers is $2^{1/3} \approx 1.26$.

\begin{table}
\begin{center}
\begin{tabular}{cccccc}
\hline
\text{frictionless} &\text{move based}& \text{time based} & \text{buy \& hold}\\
\hline
2.50\% & 2.47\% & 2.46 \% & 2.32\% \\
\hline
\end{tabular}
\end{center}
\caption{Simulated expected profits~\eqref{objectivetac2} for different strategies.  Parameters are: sample size $N=10^6$, $dt=1/250$, $\mu=8 \%$, $\sigma=16\%$, $\gamma=5$, $T=20$, and $\varepsilon=1\%$.} 
\label{tab:performancebs1d}
\end{table}

To get a feel for the magnitude of these effects, let us consider a concrete example. For $\mu=8\%$, $\sigma=16\%$, and $\gamma=5$, the frictionless target portfolio is given by $w^*=62.5\%$. For a $1\%$ transaction cost, our time-based rebalancing rule in turn prescribes to trade once every $2.23$ years. Even for transaction costs of only $\varepsilon=0.1\%$, the waiting times are still almost 6 months. Accordingly, the welfare losses of all rebalancing strategies are relatively small. Indeed, the difference between the optimal time- and move-based rebalancing rules turns out to be negligibly small here, cf.~Table~\ref{tab:performancebs1d}.

%
%

\paragraph{Kim-Omberg Model}
Next, we pass to a setting where stochastic investment opportunities provide additional motives to trade, which makes transaction costs more important, cf.~\cite{lynch.tan.11}. To this end, we consider a variant of the model of Kim and Omberg~\cite{kim.omberg.96}, where expected returns are mean reverting. In order to satisfy our regularity assumptions, we consider a version where the expected returns are truncated for excessively large values of the state variable:
\begin{align*}
dS_t^1 =& S_t^1\left(\mu^1(Y_t) dt + \sigma dB^1_t\right),\\
dY_t=& \lambda_{Y} \left(\bar{Y}-\mu^1(Y_t)\right)dt + \alpha_{Y} \eta dB_t^1+ \alpha_Y \sqrt{1-\eta^2} dB^2_t.
\end{align*}
Here, $B= (B^1, B^2)^\top$ is two-dimensional Brownian motion and the function $\mu^1$ is a smooth cut-off version of the identity function, chosen so that $\mu^1$ is $C^2$, has bounded derivative, and $y \mapsto \mu^1(y-\bar{Y})$ is odd. That is, there exist $y_{\min}<y_{\max}$ and a small constant $\xi>0$ such that
\begin{align*}
(\mu^1)'(y) =&
\begin{cases}
1,\,\text{for } y \in [y_{\min}+\xi,y_{\max}- \xi],\\
s(y), \, \text{for } y \in (y_{\min},y_{\min}+\xi) \cup (y_{\max}-\xi,y_{\max}),\\
0,\, \text{otherwise},
\end{cases}
\end{align*} 
where $y \mapsto s(y)$ is a $(0,1)$-valued function. The state variable $Y$ follows a modified Ornstein-Uhlenbeck process with long-run mean $0<\bar Y\in (y_{\min},y_{\max})$,\footnote{To see that this is indeed the long-run mean, compute the invariant measure of the one-dimensional diffusion $Y$, which has exponential tails and is symmetric around $\bar{Y}$ because $y \mapsto \mu^1(y-\bar{Y})/\sigma^2$ is odd.} mean reversion speed $\lambda_Y >0$.  The diffusion vector of $Y$ is then given by $(\alpha_Y \eta, \alpha_Y \sqrt{1-\eta^2})$, where $\alpha_Y>0$ and $\eta\in (-1,1)$. 
 
 The additional state variable implies that the frictionless Merton solution is no longer constant but stochastic $w^*(Y_t)= \mu^1(Y_t)/\gamma \sigma^2$. We assume that the risk aversion $\gamma$ and the cut-off levels $y_{\min},y_{\max}$ for $\mu^1$ are chosen so that $\delta_1\leq w^{*} \leq \delta_2$ for $0<\delta_1<\delta_2<1$. Then, the Merton proportion $w^*(Y)$ is again an It\^o process (cf. Lemma~\ref{momentmerton}):
 \begin{align*}
 d w^{*}(Y_t)={\tilde\mu}(Y_t) dt +{\tilde\sigma}(Y_t) d B_t,
 \end{align*} 
 with
 \begin{align*}
\tilde \sigma(Y_t)= 
\begin{cases}
\frac{1}{\gamma\sigma^2} (\alpha_Y \eta, \alpha_Y \sqrt{1-\eta^2})^\top,\, &\text{for } Y_t \in [y_{\min}+\xi,y_{\max}- \xi],\\
\frac{s(Y_t)}{\gamma\sigma^2} (\alpha_Y \eta, \alpha_Y \sqrt{1-\eta^2})^\top,\, &\text{for } Y_t \in (y_{\min},y_{\min}+\xi) \cup (y_{\max}-\xi,y_{\max}),\\
0, &\,\text{otherwise}.
\end{cases}
 \end{align*}
A straightforward computation shows that the matrix $\beta(Y)$ from \eqref{defbeta} is given by
 \begin{align*}
 \beta(Y_t) =&
 \begin{cases}
 \frac{1}{\gamma\sigma^2} (\alpha_Y \eta, \alpha_Y \sqrt{1-\eta^2})^\top -w^{*}(Y_t)(1-w^{*}(Y_t)) (\sigma,0)^\top,\\
  \qquad \qquad   \qquad \qquad  \qquad  \qquad\text{for } Y_t \in [y_{\min}+\xi,y_{\max}- \xi],\\
 \frac{s(Y_t)}{\gamma\sigma^2} (\alpha_Y \eta, \alpha_Y \sqrt{1-\eta^2})^\top-w^{*}(Y_t)(1-w^{*}(Y_t)) (\sigma,0)^\top,\\
  \qquad \qquad   \qquad \qquad  \qquad  \qquad\text{for } Y_t \in  (y_{\min},y_{\min}+\xi) \cup (y_{\max}-\xi,y_{\max}),\\
 -w^{*}(Y_t)(1-w^{*}(Y_t)) (\sigma,0)^\top, \,\text{otherwise}.
 \end{cases}
 \end{align*}
In particular, there exists a $C>0$ such that
\begin{align*}
\| \beta(Y_t)\|_{2,1} \geq &
 \begin{cases}
 |\beta(Y_t)^{12}| = \frac{\alpha_Y \sqrt{1-\eta^2}}{\gamma\sigma^2}>0, \,\text{for } Y_t \in [y_{\min}+\xi,y_{\max}- \xi],\\
 \geq C \|\beta(Y_t)\|_1 = C \left(|\frac{s(Y_t)}{\gamma\sigma^2} \alpha_Y \eta -\sigma w^{*}(Y_t)(1-w^{*}(Y_t))|+|\frac{s(Y_t)}{\gamma\sigma^2}\alpha_Y \sqrt{1-\eta^2} |  \right) >0,\\
  \qquad \qquad   \qquad \qquad  \qquad  \qquad\text{for } Y_t \in (y_{\min},y_{\min}+\xi) \cup (y_{\max}-\xi,y_{\max}),\\
 |\beta(Y_t)^{11}|\geq \delta_1(1-\delta_2) \sigma>0, \,\text{otherwise},
 \end{cases}
 \end{align*}
which in turn implies that Assumption~\ref{ass:2} and Condition \eqref{Assumptionv} are satisfied. As a result, Theorem~\ref{mainresult_mul} shows that the optimal trading frequency now depends on the state variable and is given by
\begin{align*}
(\tau^*_{j}-\tau_{j-1}^*)(Y_{\tau_{j-1}})= \left(\sqrt{\frac{8}{\pi}}\frac{\vep}{\gamma \sigma^2 \|\beta (Y_{\tau_{j-1}}) \|_{2,1}}\right)^{2/3}, \quad \forall j \geq 1,
\end{align*}
where $\|\beta(y)\|_{2,1} = 2\sigma^2 (\frac{1}{2}(w^*(y)(1-w^*(y)))^2-\frac{\alpha_Y}{\gamma \sigma^3}\eta w^*(y)(1-w^*(y))+\frac{\alpha_Y^2}{2\gamma^2 \sigma^6})$ on $[y_{\min}+\xi,y_{\max}- \xi]$.

The corresponding leading-order performance loss~\eqref{tcformula} reads as follows:
\begin{align*}
\sigma^2\E\left[\int_0^T \left(\frac{27}{2 \pi} \gamma \vep^2 \|\beta(Y_t)\|_{2,1}^2\right)^{1/3} dt\right].
\end{align*}
For unit relative risk aversion, the leading-order loss of the optimal move-based strategy is again smaller by a universal factor of $(12/\pi)^{1/3}$, just like in the Black-Scholes model, \cite[Theorem 4.1]{melnyk.seifried.14}. Heuristic arguments as in \cite{kallsen.muhlekarbe.15,melnyk.seifried.14} again suggest that this relationship remains true for more general preferences.

\begin{table}
\begin{center}
\begin{tabular}{cccccc}
\hline
\text{frictionless} & \text{move based} & \text{time based} & \text{constant frequency} &\text{buy \& hold}\\
\hline
2.65\% & 2.28\% & 2.09\% & 2.09\%& 1.27\% \\
\end{tabular}
\end{center}
\caption{Simulated expected profits~\eqref{objectivetac2} for different strategies.   Parameters are taken from~\cite{barberis.00}: $\bar{Y}= 5.60\%$, $\alpha_Y= 3.68\%$, $\lambda_Y=0.2712$, $\sigma=14.28\%$, $\gamma=5$, $\eta=-0.9351$, $T=20$, $\varepsilon=1\%$, $dt=1/250$ and $N=10^6$.}
\label{tab:performancebs1dkm}
\end{table}

To illustrate these results, we consider parameters estimated from a long time series of US equity market data~\cite{barberis.00}:
$$\bar{Y}= 5.60\%, \quad \alpha_Y= 3.68\%, \quad \lambda_Y=0.2712, \quad \sigma=14.28\%, \quad \eta=-0.9351.$$
Table~\ref{tab:performancebs1dkm} collects Monte-Carlo estimates for the performances of the optimal time-based rebalancing rule, its move-based counterpart, and a simple buy-and-hold strategy. In addition, we also consider the strategy associated to the optimal \emph{constant} trading frequency~\eqref{constfreq}, which is $6.7$ months for a $1\%$ transaction cost and an investor with risk aversion $\gamma=5$. We observe that the differences between the various strategies are much more pronounced than for the Black-Scholes model, in line with the results of~\cite{lynch.tan.11}. However, the relatives magnitudes of these differences are virtually the same, and in excellent agreement with our asymptotic results. Finally, note that adapting the time-based rule to changing market characteristics only has a very small effect here; the performance of the simple constant discretization rule is virtually the same.

\subsection{Two Risky Assets} \label{ss:two}
\paragraph{Black-Scholes Model} Now, we turn to a Black-Scholes model with two risky assets, with expected excess returns $\mu=(\mu^1,\mu^2) \in \mathbb{R}_{>0}^2$ and diffusion matrix 
\begin{align*}
{\sigma} &=
\left(
 \begin{matrix}
  \alpha & 0\\
  v \cdot \rho & v \sqrt{1-\rho^2}
 \end{matrix}
 \right),
\end{align*}
i.e., the risky assets have volatilities $\alpha, v >0$, respectively, and correlation $\rho \in [-1,1]$. As before, we assume that the risk aversion is sufficiently large so that the Merton portfolio 
\begin{align*}
w^* := (w^{*,1},w^{*,2})^\top&= \frac{1}{\gamma} \Sigma^{-1} \mu, \quad \text{with} \quad \Sigma = \sigma \sigma^\top,
\end{align*}
satisfies Assumption~\ref{ass:2}. The diffusion coefficient of this constant portfolio is zero; using~\eqref{defbeta} a straightforward calculation shows
\begin{align*}
\beta &=
\left(
 \begin{matrix}
  (w^{*,1}-1) w^{*,1}\alpha+ w^{*,1}w^{*,2} v \rho & w^{*,1}w^{*,2} v \sqrt{1-\rho^2}\\
  (w^{*,1}w^{*,2}\alpha+w^{*,2}(w^{*,2}-1) v \rho & w^{*,2}(w^{*,2}-1)v \sqrt{1-\rho^2}
 \end{matrix}
 \right).
\end{align*}
Therefore, we have $\|\beta\|_{2,1} \geq |\beta^{22}|>0$ if $w^{*,2}>0$  and $\|\beta\|_{2,1} = |(w^{*,1}-1) w^{*,1}\alpha|>0$ if $w^{*,2}=0$, which shows that Condition~\eqref{Assumptionv} is satisfied. Theorem~\ref{mainresult_mul} in turn yields that the optimal trading times are given by 
\begin{equation*}
\tau_{j+1}^*-\tau_{j}^* = \left(\frac{\varepsilon\sqrt{\frac{2}{\pi}}\|\beta\|_{2,1}  }{\frac{\gamma}{2} \cdot \mathrm{tr}\left(\beta^\top \Sigma \beta\right)}\right)^{2/3}.
\end{equation*}

\begin{figure}[tbp]
\begin{center}
\subfigure{\includegraphics[width=0.49\textwidth]{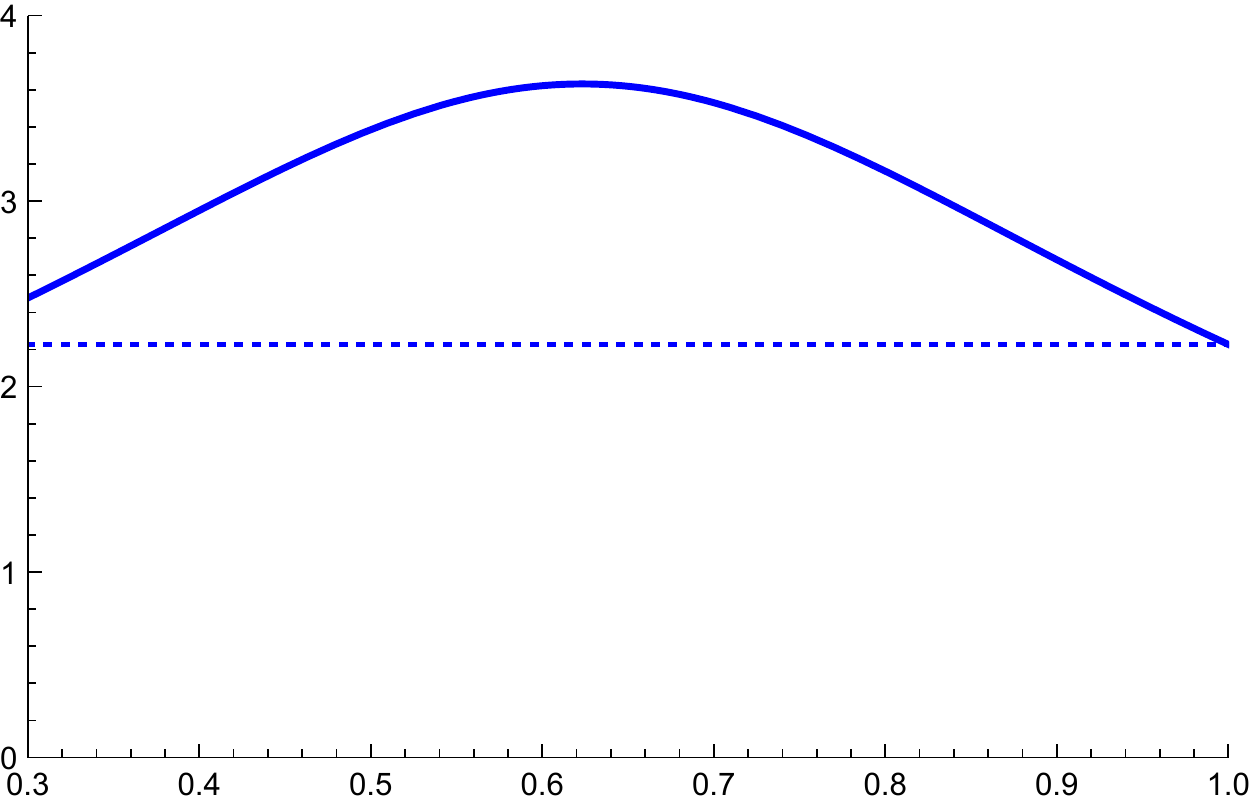}}\hfill
\subfigure{\includegraphics[width=0.49\textwidth]{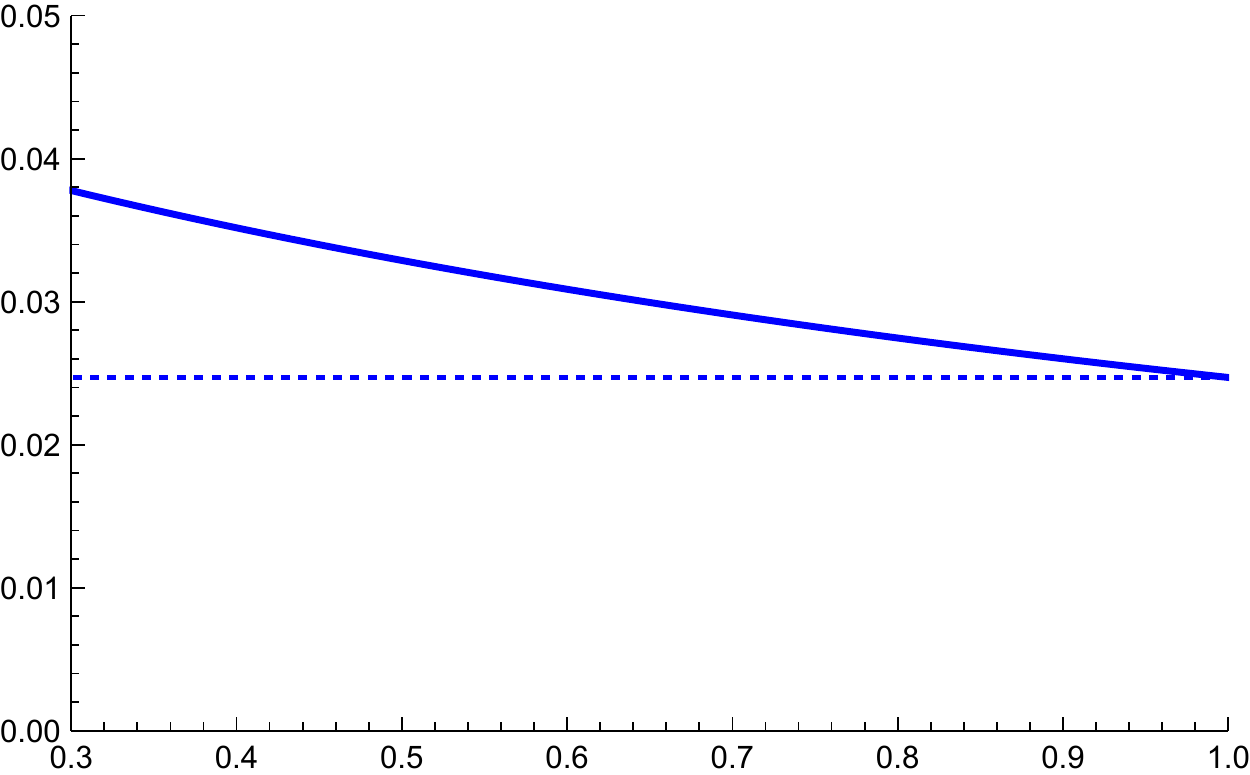}}
\caption{Left Panel: The optimal waiting times $\tau_{j}^*-\tau_{j-1}^*$ plotted for various correlations $\rho$ in the two-dimensional (solid) resp. one-dimensional (dotted) Black-Scholes model. Right Panel: Expected profits~\eqref{objectivetac2} plotted for various correlations $\rho$ in the two-dimensional (solid) resp. one-dimensional (dotted) Black-Scholes model. Parameters are $\mu^1=\mu^2=8\%$, $\alpha=v = 16\%$, $\gamma=5$ and $\varepsilon = 1\%$. 
}
\label{fig:Avsrho}
\end{center}
\end{figure}

For risky assets with identical expected excess returns and volatilities, the effect of a nonzero correlation is illustrated in Figure~\ref{fig:Avsrho} (left panel). For very large correlation ($\rho \approx 1$), the market is essentially equivalent to one with only a single risky asset. Accordingly, the optimal trading frequency converges to its univariate counterpart as $\rho \uparrow 1$. For intermediate correlations $\rho \in (0,1)$, the relationship is surprisingly non-monotonic even though the associated optimal welfare~\eqref{objectivetac2} is decreasing in $\rho$ (Figure~\ref{fig:Avsrho}, right panel).

\paragraph{Numerical Results} 

We again compare the performance of our time-based policy to a number of alternatives. The first benchmark is the asymptotically optimal move-based strategy. For more than one risky asset, explicit formulas are no longer available because one still needs to solve a free boundary problem even after passing to the small cost limit \cite{possamai.al.15}. For our illustration, we use the policy iteration algorithm proposed in \cite{max.15} to carry out these computations. The second competitor is again a simple buy-and-hold portfolio. The simulated performances are collected in Table~\ref{tab:performancebs2d}. As in the one-dimensional case, the difference between the optimal time- and move-based strategies is very small. Whence, with constant investment opportunities, our simple, explicit trading rule is  an appealing alternative to the much less tractable optimizer.

\begin{table}
\begin{center}
\begin{tabular}{cccccc}
\hline
$\rho$ & \text{frictionless} &\text{move based}  & \text{time based} & \text{buy \& hold}\\
\hline
0.3& 3.84\% & 3.80\%& 3.77\% & 3.67\%\\
0.6& 3.12\% & 3.09\%& 3.08\% & 3.01\%\\
0.9& 2.62\% & 2.60\%& 2.59\% & 2.48\%\\
\hline
\end{tabular}
\end{center}
\caption{Simulated expected profits~\eqref{objectivetac2} for different strategies.  Parameters are: sample size $N=10^6$, $dt=1/250$, $\mu^1=\mu^2=8 \%$, $\alpha=v=16\%$, $\gamma=5$, $T=20$, and $\varepsilon=1\%$.} 
\label{tab:performancebs2d}
\end{table}

\paragraph{Kim-Omberg Model}
Finally, we consider a Kim and Omberg-type model with two risky assets:
\begin{align*}
dS_t^1 =& S_t^1\left(\mu^1(Y_t) dt + \alpha dB^1_t\right),\\
dS_t^2 =& S_t^2\left(\mu^2(Y_t) dt + v \rho dB^1_t+v \sqrt{1-\rho^2} dB^2_t\right),\\
dY_t=& \lambda_{Y} \left(\bar{Y}-\mu^1(Y_t)\right)dt + \alpha_{Y} \eta dB_t^1+ \alpha_Y \sqrt{1-\eta^2} dB^2_t,
\end{align*}
Similarly as in the one-dimensional case, $B= (B^1, B^2)^\top$ is two-dimensional Brownian motion and for each $i\in\{1,2\}$ the function $\mu^i$ is a smooth truncation of the identity function with cut-off levels $y_{i,\min}$ and $y_{i,\max}$ and the associated small positive constant $\xi_i$, chosen so that $\mu^i$ is $C^2$ and has bounded derivative. Again, the state variable $Y$ follows a modified Ornstein-Uhlenbeck process with long-run mean $0<\bar Y \in (y_{1,\min},y_{1,\max})$, mean reversion speed $\lambda_Y>0$ and volatility vector $(\alpha_Y \eta, \alpha_Y \sqrt{1-\eta^2})^\top$ with $\eta\in (-1,1)$.\footnote{Note that we use the same state variable for both risky assets here, but allow for different degrees of correlations with return shocks.} The volatility matrix of the risky assets $S=(S^1,S^2)^\top$ is given by
\begin{align*}
{\sigma} &=
\left(
 \begin{matrix}
  \alpha & 0\\
  v \cdot \rho & v \sqrt{1-\rho^2}
 \end{matrix}
 \right),
\end{align*}
with $\rho \in (-1,1)$ and $0<\alpha<1/\rho$ in case of $\rho>0$, and the frictionless Merton portfolio is 
$$w^*(Y_t):= (w^{*,1}(Y_t),w^{*,2}(Y_t))^\top = \frac{1}{\gamma} \Sigma^{-1} (\mu^1(Y_t),\mu^2(Y_t))^\top,$$
with $\Sigma= \sigma \sigma^\top$ and 
\begin{align*}
\Sigma^{-1}= (\xi^{ij})_{1\leq i,j\leq 2} =&
\frac{1}{\alpha^2 v^2(1-\rho^2)}\left(
 \begin{matrix}
	 v^2 & -\rho \alpha v\\
  -\rho \alpha v & \alpha^2
 \end{matrix}
 \right).
\end{align*}
We assume that the risk aversion $\gamma$ and the cut-off levels for $\mu^1,\mu^2$ are chosen so that $$\delta_1 \leq w^{*,1}<1, \quad \delta_2\leq w^{*,2}<1, \quad 0< w^{*,1}+w^{*,2}\leq 1,$$
for some $\delta_1, \delta_2>0$. Similar calculations as before yield that for $i\in \{1,2\}$ and $y \in [y_{1,\min}+\xi_1,y_{1,\max}- \xi_1]$ the diffusion coefficients of the Merton proportion $\tilde \sigma^i$ and the matrix $\beta(Y_t)$ are given by
\begin{align*}
\tilde \sigma ^i(Y_t)= &\frac{1}{\gamma}(\xi^{i1}+\xi^{i2})(\alpha_Y \eta, \alpha_Y \sqrt{1-\eta^2})^\top\\
\beta^1(Y_t) =& \tilde{\sigma}^1(Y_t)-w^{*,1}(Y_t)((\alpha,0)^T-w^{*,1}(Y_t) (\alpha,0)^T-w^{*,2}(Y_t) (v \rho, v \sqrt{1-\rho^2})^T)\\
\beta^2(Y_t) =& \tilde{\sigma}^2(Y_t)-w^{*,2}(Y_t)((v \rho, v \sqrt{1-\rho^2})^T-w^{*,1}(Y_t) (\alpha,0)^T-w^{*,2}(Y_t) (v \rho, v \sqrt{1-\rho^2})^T).
\end{align*}
On $\R \backslash [y_{1,\min}+\xi_1,y_{1,\max}- \xi_1]$ similar formulas can be derived. In particular, we have for all $y \in \R$
$$\|\beta(y)\|_{2,1} \geq |\beta^{12}(y)| \geq w^{*,1}(y) w^{*,2}(y) v \sqrt{1-\rho^2}> \delta_1 \delta_2 v \sqrt{1-\rho^2} >0,$$
which in turn shows that Assumption~\ref{ass:2} and Condition~\eqref{Assumptionv} are satisfied.

\paragraph{Numerical Results}

In Table~\ref{tab:performancebs2dkm}, we again compare the performance of our time-based rebalancing rules to two alternative strategies. With mean-reverting returns, time-based rebalancing offers marked gains compared to buy-and-hold. However, the performance loss compared to the frictionless benchmark is also not negligible, similarly as in the one-dimensional case. The asymptotically optimal move-based strategy in this setting is not known explicitly, and even computing it numerically would be quite involved. As a partial remedy, we therefore consider a ``pasted'' move-based strategy, where the univariate no-trade regions are simply concatenated. This is motivated by results that link portfolio choice with high constant relative risk aversion to constant absolute risk aversion \cite{nutz.12,guasoni.muhlekarbe.15}, for which the multivariate problem factorizes for uncorrelated assets \cite{liu.04,guasoni.muhlekarbe.15}. Whence, these policies are expected to be useful proxies if relative risk aversion is sufficiently high and correlation between assets is sufficiently low. We find that for relative risk aversion $\gamma=5$, the pasted strategy outperforms the time-based rule even for high correlations, though the difference declines as correlation is increased. 

\begin{table}
\begin{center}
\begin{tabular}{cccccc}
\hline
$\rho$ & \text{frictionless} &\text{pasted} & \text{time based}& \text{buy \& hold}\\
\hline
0.3& 4.07\% & 3.53\% & 3.26\% & 2.25\%\\
0.6& 3.31\% & 2.80\% & 2.64\% & 1.79\%\\
0.9& 2.79\% & 2.27\% & 2.21\% & 1.40\%\\
\end{tabular}
\end{center}
\caption{Simulated expected profits~\eqref{objectivetac2} for different strategies.  Parameters are taken from~\cite{barberis.00}: $\bar{Y}^1=\bar{Y}^2= 5.60\%$, $\alpha_Y= v_Y= 3.68\%$, $\lambda_Y^1=\lambda_Y^2=0.2712$, $\alpha=v=14.28\%$, $\gamma=5$, $\eta=-0.9351$, $T=20$, $\varepsilon=1\%$, $dt=1/250$ and $N=10^6$.}
\label{tab:performancebs2dkm}
\end{table}

\subsection{Many Risky Assets}

Our formulas readily allow to determine optimal time-based rebalancing rules in high-dimensional contexts and analyze their performance. To illustrate this, let us consider a model with 10 risky assets. In order to sidestep issues inherent in the estimation of high-dimensional correlation matrices, we use the (positive definite) estimate provided in \cite[Table 6.5]{embrechts.al.15} for 10 assets from the DJIA. As in Section~\ref{ss:two}, we complement this with identical values for the means and variances of each asset for simplicity. Choosing $\mu=4\%$, $\sigma=20\%$ and a risk aversion of $\gamma=5$, this leads to a frictionless Merton portfolio where all risky weights are positive and a total fraction of $66\%$ is invested into the ten risky assets. In particular, Assumptions~\ref{ass:2} and \eqref{Assumptionv} are satisfied.\footnote{Higher expected returns as in Section~\ref{ss:two} would require exogenous portfolio constraints to avoid shorting the safe asset. The analysis of such constrained models is outside the scope of the present paper, but an interesting direction for further research.}

After the calculation of the corresponding matrix $\beta$, Theorem~3.4 readily yields the optimal trading frequency, which turns out to be even lower than for the one-dimensional portfolios. Here, for $\varepsilon =1\%$, it prescribes to rebalance once every 3.48 years; for a single risky asset with the same characteristics, the corresponding waiting time would be $1.84$ years. This lower trading frequency with multiple risky assets is in line with observations made in \cite{law.al.07,bichuch.guasoni.16} that optimal no-trade regions should be wider in multivariate contexts.


\begin{appendix}

\section{Proofs}\label{heuristicmainresult}
Throughout this appendix we often suppress the dependence on the state variable $Y$ to ease notation. For example, $\mu^i(Y_t)$ is frequently abbreviated to $\mu^i_t$. 

\subsection{Portfolio Dynamics}

First, we compute the dynamics of the Merton portfolio~\eqref{merton} and its discretized version $w^\varepsilon$ (cf. Definition~\ref{defVtac}):

\begin{mylemma}\label{momentmerton}
Under Assumptions~\ref{ass:1} and \ref{ass:2}, the Merton portfolio $w^*(Y_t)$ is an It\^o process with the following dynamics:\footnote{For better readability we omit the explicit but complicated formula of the drift $\tilde \mu (Y_t)=(\tilde \mu^1(Y_t),\ldots, \tilde \mu^m(Y_t))^T$.}
\begin{align}\label{dynamics-w}
dw^{*,i}(Y_t)={\tilde\mu}^i(Y_t) dt +{\tilde\sigma}^{i}(Y_t) d B_t, \quad i=1,\ldots,m,
\end{align}
with 
\begin{eqnarray}\label{deftildesigma}
\R^d \ni \tilde \sigma ^{i} (Y_t)= \left(\sum_{l=1}^m \sum_{k=1}^m\left(w^{*,l}(Y_t) \xi^{ik}(Y_t) (\nabla \Sigma^{kl}) G(Y_t)\right) + \frac{1}{\gamma} \xi^{il}(Y_t) (\nabla \mu^l) G(Y_t)\right)^\top,
\end{eqnarray}
where 
\begin{align*}
\nabla \Sigma^{kl} = \left(\frac{\partial \Sigma^{kl}}{\partial y_1}, \ldots, \frac{\partial \Sigma^{kl}}{\partial y_p} \right)(Y_t), \quad 
\nabla \mu^l = \left(\frac{\partial \mu^l}{\partial y_1}, \ldots, \frac{\partial \mu^l}{\partial y_p} \right)(Y_t).
\end{align*}
 Moreover, its drift and diffusion coefficients are bounded, 
\begin{align*}
\sup_{y\in E} \|\tilde{\mu}(y)\|_{\R^m}< \infty, \quad \sup_{y\in E} \|\tilde{\sigma}(y)\|_{2,1}< \infty.
\end{align*}
\end{mylemma}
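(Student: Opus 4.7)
The plan is to recognize the Merton portfolio $y\mapsto w^*(y)=\frac{1}{\gamma}\Sigma^{-1}(y)\mu(y)$ as a smooth deterministic function of the state variable and then apply the multidimensional It\^o formula to $w^*(Y_t)$. Before invoking It\^o, I would first verify that each component $w^{*,i}$ lies in $C_b^2(E;\mathbb{R})$. By Assumption~\ref{ass:1}(i), $\mu$ and $\sigma$, and hence $\Sigma=\sigma\sigma^\top$, belong to $C_b^2$; by Assumption~\ref{ass:1}(ii), $\Sigma$ is uniformly elliptic with $\Sigma^{-1}$ uniformly bounded on $E$. Since matrix inversion is a smooth map on the open set of invertible matrices, the entries $\xi^{ij}$ of $\Sigma^{-1}$ inherit uniformly bounded first and second partial derivatives via the chain rule. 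As a finite sum of products of $C_b^2$ functions, $w^{*,i}=\frac{1}{\gamma}\sum_l \xi^{il}\mu^l$ is therefore in $C_b^2(E;\mathbb{R})$.

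Next I would compute $\nabla w^{*,i}$ explicitly in order to identify $\tilde\sigma^i$. The key matrix-calculus identity is $\partial_k\Sigma^{-1}=-\Sigma^{-1}(\partial_k\Sigma)\Sigma^{-1}$, obtained by differentiating $\Sigma\Sigma^{-1}=I$. Applying the product rule to $\gamma w^{*,i}=\sum_l \xi^{il}\mu^l$, substituting this identity for the $\partial_k\xi^{il}$-terms, and then collapsing the remaining inner sum via $\sum_l \xi^{rl}\mu^l=\gamma w^{*,r}$ yields, up to the sign convention adopted in \eqref{deftildesigma}, the announced formula for the $i$-th row of the diffusion coefficient (with $G=g$). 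It\^o's formula applied to $w^{*,i}(Y_t)$ with $dY_t=b(Y_t)\,dt+g(Y_t)\,dB_t$ then gives
\[
dw^{*,i}(Y_t)=\Bigl(\nabla w^{*,i}(Y_t)\cdot b(Y_t)+\tfrac{1}{2}\mathrm{tr}\bigl(g^\top \nabla^2 w^{*,i}\, g\bigr)(Y_t)\Bigr)dt+\nabla w^{*,i}(Y_t)\, g(Y_t)\,dB_t,
\]
which is precisely \eqref{dynamics-w} with $\tilde\sigma^i=\nabla w^{*,i}\, g$ and $\tilde\mu^i$ read off from the drift bracket.

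For the boundedness assertions, the $C_b^2$-regularity of $w^{*,i}$ established in the first step gives uniformly bounded $\nabla w^{*,i}$ and $\nabla^2 w^{*,i}$ on $E$, while $b,g$ are bounded by Assumption~\ref{ass:1}(i). Hence both $\tilde\mu^i$ and each row $\tilde\sigma^i$ are uniformly bounded on $E$; the $L_{2,1}$-bound on $\tilde\sigma$ follows immediately, since $\|\tilde\sigma(y)\|_{2,1}$ is a finite sum of $m$ uniformly bounded Euclidean row-norms. No serious obstacle is anticipated: the only delicate point is the index bookkeeping in the $(k,l)$-double sum of \eqref{deftildesigma}, which presumably also explains why the authors choose to suppress the analogous explicit expression for $\tilde\mu$.
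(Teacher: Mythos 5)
Your proposal is correct and takes essentially the same approach as the paper: both apply It\^o's formula to the $C^2$ function $w^*=\frac{1}{\gamma}\Sigma^{-1}\mu$ of the diffusion $Y$, identifying the diffusion coefficient through the identity $\partial_k\Sigma^{-1}=-\Sigma^{-1}(\partial_k\Sigma)\Sigma^{-1}$ (your remark that the resulting expression matches \eqref{deftildesigma} only up to a sign is accurate). The only minor divergence is in the boundedness step, where the paper invokes Assumption~\ref{ass:2} to bound the factors $w^{*,l}$ appearing in \eqref{deftildesigma}, while you obtain all bounds from the $C_b^2$-regularity of $w^{*,i}$ guaranteed by Assumption~\ref{ass:1} alone --- both arguments are valid.
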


\begin{proof}
Under Assumption~\ref{ass:1}, the It\^o representation and the formulas for the corresponding drift and diffusion coefficients follow from It\^o's formula. Moreover, as the Merton portfolio shorts none of the assets by Assumption~\ref{ass:2} ($0 \leq w^{*,i}_t <1$ for $i= 1,\ldots,m$ and $t\in [0,T]$), all functions appearing in~\eqref{deftildesigma} are bounded. An analogous argument applies for the drift coefficients $\tilde \mu ^i$. 
\end{proof}

\begin{mycor}\label{betabound}
Under Assumptions~\ref{ass:1} and \ref{ass:2}, the matrix $\beta_t = (\beta^1_t,\ldots,\beta^m_t)^\top$ with $\beta^i$ as in~\eqref{defbeta} has bounded $L_{2,1}$ norm, i.e., there exists a constant $K_\beta <\infty$ such that $\sup_{y\in E}\|\beta(y)\|_{2,1}< K_\beta.$
\end{mycor}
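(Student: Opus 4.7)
The plan is to prove the claim by a direct triangle-inequality estimate applied to the definition \eqref{defbeta} of $\beta^i$, using the boundedness results already in hand.

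First, I would unfold the $L_{2,1}$ norm as
\[
\|\beta(Y_t)\|_{2,1} = \sum_{i=1}^m \|\beta^i(Y_t)\|_{\R^d},
\]
and then, for each row, apply the triangle inequality to \eqref{defbeta}:
\[
\|\beta^i(Y_t)\|_{\R^d} \leq \|\tilde{\sigma}^i(Y_t)\|_{\R^d} + |w^{*,i}(Y_t)|\,\bigl\|\sigma^i(Y_t) - \sum_{k=1}^m w^{*,k}(Y_t)\,\sigma^k(Y_t)\bigr\|_{\R^d}.
\]
Summing over $i$ gives
\[
\|\beta(Y_t)\|_{2,1} \leq \|\tilde{\sigma}(Y_t)\|_{2,1} + \sum_{i=1}^m |w^{*,i}(Y_t)|\bigl(\|\sigma^i(Y_t)\|_{\R^d} + \sum_{k=1}^m |w^{*,k}(Y_t)|\,\|\sigma^k(Y_t)\|_{\R^d}\bigr).
\]

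Next I would bound each piece. Lemma~\ref{momentmerton} yields a finite constant $C_{\tilde\sigma}$ with $\sup_{y\in E}\|\tilde\sigma(y)\|_{2,1} \leq C_{\tilde\sigma}$. Assumption~\ref{ass:1}(i) gives $\sup_{y\in E}\|\sigma(y)\|_{2,1} < K_\sigma$, so each $\|\sigma^i(y)\|_{\R^d} \leq K_\sigma$. Assumption~\ref{ass:2} guarantees $0 \leq w^{*,i}(Y_t) < 1$ and $\sum_{i=1}^m w^{*,i}(Y_t) \leq 1$, so $\sum_i |w^{*,i}(Y_t)| \leq 1$. Combining these bounds,
\[
\|\beta(Y_t)\|_{2,1} \leq C_{\tilde\sigma} + K_\sigma + K_\sigma = C_{\tilde\sigma} + 2K_\sigma =: K_\beta,
\]
which is independent of $t$ and of $Y_t\in E$.

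There is no substantive obstacle here; the statement is essentially a bookkeeping corollary that reassembles the boundedness of $\tilde\sigma$ from Lemma~\ref{momentmerton} with the boundedness of $\sigma$ and of the Merton weights from Assumptions~\ref{ass:1} and~\ref{ass:2}. The only mild care needed is to make sure the estimate on the bracketed drift-correction term in $\beta^i$ uses $\sum_k w^{*,k}\leq 1$ to keep the bound uniform in $i$, which is exactly what Assumption~\ref{ass:2} provides.
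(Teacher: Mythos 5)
Your proof is correct and is precisely the argument the paper leaves implicit by stating this as a corollary of Lemma~\ref{momentmerton}: a row-wise triangle inequality on \eqref{defbeta}, combined with the uniform bounds on $\tilde\sigma$ from that lemma, on $\sigma$ from Assumption~\ref{ass:1}, and on the Merton weights from Assumption~\ref{ass:2}. No gaps; the bookkeeping (in particular using $\sum_i w^{*,i}\leq 1$ to control the portfolio-volatility correction term uniformly) matches what the paper intends.
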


We now turn to the dynamics of the discretized Merton portfolio:

\begin{mylemma}\label{lemmawhat}
Define the rebalancing times $(\tau_j)_{j\in \mathbb{N}}$ as in Definition~\ref{defA}. Under Assumptions~\ref{ass:1} and \ref{ass:2}, the corresponding risky weights $w^\varepsilon=(w^{\varepsilon,1},\ldots,w^{\varepsilon,m})^\top$ from Definition~\ref{defVtac} satisfy: 
\begin{align}
dw^{\varepsilon,i}_t
 &= w^{\varepsilon,i}_{t}\left(\mu^{w^\varepsilon,i}_t dt+ \left( \sigma^{i}_t-\sum_{k=1}^m w^{\varepsilon,k}_{t} \sigma^k_t\right) dB_t\right) \quad \mbox{on } [\tau_{j-1},,\tau_{j}), \quad j \geq 1, \label{dynamicsweightTAC}
\end{align} 
with
\begin{align*}
\mu^{w^\varepsilon,i}_t := &  \mu^i_t-w^\varepsilon_{t} \mu_t-\sigma^i_t \sum_{k=1}^m w^{\varepsilon,k}_{t} \sigma^k_t + \left(\sum_{k=1}^m w^{\varepsilon,k}_{t} \sigma^k_t\right) \left(\sum_{k=1}^m w^{\varepsilon,k}_{t} \sigma^k_t\right), \quad i=1,\ldots, m, 
\end{align*}
and
\begin{align}
w^{\varepsilon,i}_{\tau_{j-1}}&=w^{*,i}_{\tau_{j-1}}, \quad j \geq 1.\label{dynamicsweightTACini}
\end{align}
In particular, the processes $w^{\varepsilon,i}_t$, $i=1,\ldots,m$, are well defined and take values in $[0,1)$.
\end{mylemma}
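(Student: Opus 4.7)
The plan is to work on each interval $[\tau_{j-1},\tau_j)$ between consecutive trading times, where Definition~\ref{defVtac} gives $V^{\varepsilon,i}$ explicitly as a geometric Brownian motion and $V^{\varepsilon,0}$ as a constant. Differentiating these expressions yields
\[
 dV^{\varepsilon,i}_t = V^{\varepsilon,i}_t\bigl(\mu^i_t\,dt + \sigma^i_t\,dB_t\bigr),\qquad dV^{\varepsilon,0}_t=0,
\]
so that the total wealth $V^\varepsilon_t = V^{\varepsilon,0}_t+\sum_{k=1}^m V^{\varepsilon,k}_t$ satisfies
\[
 \frac{dV^\varepsilon_t}{V^\varepsilon_t} = \sum_{k=1}^m w^{\varepsilon,k}_t\bigl(\mu^k_t\,dt+\sigma^k_t\,dB_t\bigr),
\]
since $V^{\varepsilon,k}_t/V^\varepsilon_t = w^{\varepsilon,k}_t$ by definition.

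Next I would apply It\^o's formula to the ratio $w^{\varepsilon,i}_t=V^{\varepsilon,i}_t/V^\varepsilon_t$ via the standard quotient rule
\[
 \frac{d(X/Y)}{X/Y}=\frac{dX}{X}-\frac{dY}{Y}+\frac{d[Y]}{Y^2}-\frac{d[X,Y]}{XY}.
\]
Substituting $X=V^{\varepsilon,i}$, $Y=V^\varepsilon$ and computing the quadratic (co)variations gives
\[
 \frac{d[V^\varepsilon]_t}{(V^\varepsilon_t)^2}=\Bigl(\sum_k w^{\varepsilon,k}_t\sigma^k_t\Bigr)\Bigl(\sum_l w^{\varepsilon,l}_t\sigma^l_t\Bigr)^{\!\top}\!dt,\qquad \frac{d[V^{\varepsilon,i},V^\varepsilon]_t}{V^{\varepsilon,i}_t V^\varepsilon_t}=\sigma^i_t\Bigl(\sum_k w^{\varepsilon,k}_t\sigma^k_t\Bigr)^{\!\top}\!dt.
\]
Collecting the drift and the diffusion pieces and identifying the drift with $\mu^{w^\varepsilon,i}_t$ as defined in the statement yields exactly \eqref{dynamicsweightTAC}. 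The initial condition \eqref{dynamicsweightTACini} is immediate from Definition~\ref{defVtac}, where the rebalancing rule is specified so that $w^{\varepsilon,i}_{\tau_{j-1}}=w^{*,i}_{\tau_{j-1}}$.

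It remains to verify that these weights are well defined and take values in $[0,1)$. Since $V^{\varepsilon,i}$ is a geometric Brownian motion on $(\tau_{j-1},\tau_j)$ with $V^{\varepsilon,i}_{\tau_{j-1}} = V^\varepsilon_{\tau_{j-1}} w^{*,i}_{\tau_{j-1}}\ge 0$ by Assumption~\ref{ass:2}, it remains non-negative on the interval (and strictly positive when $w^{*,i}_{\tau_{j-1}}>0$, zero otherwise). Similarly $V^{\varepsilon,0}$ is non-negative since $\sum_i w^{*,i}_{\tau_{j-1}}\le 1$. By induction on $j$ and the update rule in Definition~\ref{defVtac}, total wealth $V^\varepsilon_{\tau_{j-1}}>0$; combined with the non-negativity of all positions, this gives $V^\varepsilon_t>0$ throughout $[\tau_{j-1},\tau_j)$, so the ratio $w^{\varepsilon,i}_t\in[0,1]$ is well defined. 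The strict inequality $w^{\varepsilon,i}_t<1$ is the only delicate point: the condition $w^{*,i}_{\tau_{j-1}}<1$ from Assumption~\ref{ass:2} ensures that at least one of $V^{\varepsilon,0}_{\tau_{j-1}}$ or $V^{\varepsilon,k}_{\tau_{j-1}}$ with $k\neq i$ is strictly positive, and this remains strictly positive on $[\tau_{j-1},\tau_j)$ (either because it is constant or because a strictly positive geometric Brownian motion never reaches zero), so $V^\varepsilon_t-V^{\varepsilon,i}_t>0$ throughout. Induction over $j$ extends the conclusion to all of $[0,T]$.

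The argument is essentially routine It\^o calculus; the only point requiring attention is the strict upper bound $w^{\varepsilon,i}<1$, which is where Assumption~\ref{ass:2} is actually used (it is needed to guarantee that some other component of wealth remains strictly positive between rebalancing times and thus prevents the weight of a single asset from reaching one).
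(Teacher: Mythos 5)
Your proof is correct and follows essentially the same route as the paper's: apply It\^o's formula to the ratio $w^{\varepsilon,i}_t=V^{\varepsilon,i}_t/V^\varepsilon_t$ using the explicit dynamics from Definition~\ref{defVtac}, read off the initial condition, and deduce the $[0,1)$-bound from non-negativity of each position and strict positivity of the complementary wealth $V^\varepsilon_t-V^{\varepsilon,i}_t$ guaranteed by Assumption~\ref{ass:2}. If anything, your treatment of well-definedness is more direct than the paper's (which invokes local Lipschitz coefficients and a local solution up to an explosion time, then rules out explosion via the bound), since working with the explicitly defined ratio makes any appeal to SDE existence theory unnecessary; just note that the quotient-rule step should be read as vacuous in the degenerate case $w^{*,i}_{\tau_{j-1}}=0$, where $w^{\varepsilon,i}\equiv 0$ satisfies \eqref{dynamicsweightTAC} trivially.
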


\begin{proof}
The definition of the portfolio~$w^\varepsilon$ yields \eqref{dynamicsweightTACini} and, together with It\^o's formula and the dynamics of the risky assets $S^i$ and the wealth process $V^\vep$ (cf.~Definition~\ref{defVtac}), the dynamics~\eqref{dynamicsweightTAC}. Since both the drift and diffusion coefficients in~\eqref{dynamicsweightTAC} are locally Lipschitz continuous, for the given initial value $w^*_{\tau_{j-1}}$ there exists a unique local solution $(w^\varepsilon_t)_{\tau_{j-1}\leq t \leq \tau}$ up to an explosion time $\tau$. By definition, $w^{\varepsilon,i}_{\tau_{j-1}}=w^{*,i}_{\tau_{j-1}}\in [0,1)$  for $i=0,\ldots,m$ and $j \geq 1$. Since the price process $S$ is continuous, Definition~\ref{defVtac} implies $w^{\varepsilon,i}_t\in[0,1)$ for $i\in \{0,\ldots,m\}$, $t \in [\tau_{j-1},\tau_{j})$, and $j \geq 1$. In summary, the process $w^\varepsilon$ therefore remains $[0,1)$-valued on $[0,T]$.
\end{proof}

\subsection{Proof of Proposition~\ref{NTACprop}}\label{heuprop}

\begin{proof}[Proof of Proposition~\ref{NTACprop}]

\emph{Step 1: Estimation of the Trade Sizes.} Define the rebalancing times $(\tau_j)_{j\in \mathbb{N}}$ as in Definition~\ref{defA}. Then, by Definition~\ref{defVtac}, the wealth process $V^\varepsilon$ of the corresponding strategy $w^\varepsilon$ satisfies $dV^\varepsilon_t/V_{t}^\varepsilon= \sum_{i=1}^m w^{\varepsilon,i}_{t} (\mu^i_t dt + \sigma^i_t d B_t)$ on $(\tau_{j-1},\tau_j)$, and at time $t=\tau_j$ the risky weights $(w^{\varepsilon,1}_{\tau_{j-}},...,w^{\varepsilon,m}_{\tau_{j-}})^\top$ are rebalanced back to the frictionless targets $(w^{*,1}_{\tau_j},...,w^{*,m}_{\tau_j})^\top$. Whence, for each $i=1,\ldots,m$ the respective dollar amounts $V^\varepsilon_{\tau_j-} \Delta L^i_{\tau_j}$ transferred satisfy the following rebalancing condition:
\begin{align*}
 V^\varepsilon_{\tau_{j}-}(w^{\varepsilon,i}_{\tau_{j}-}+ \Delta L^i_{\tau_j}) =& w^{*,i}_{\tau_j} V^\varepsilon_{\tau_j}
 =w^{*,i}_{\tau_j} V^\varepsilon_{\tau_j-}\left(1-\varepsilon \sum_{k=1}^m |\Delta L^k _{\tau_j}|\right).
\end{align*}
Put differently, the changes $\Delta L^i_{\tau_j}$ in the risky weights are given by
\begin{align*}
\Delta L^i_{\tau_j}+\vep w^{*,i}_{\tau_j}  \sum_{k=1}^m |\Delta L^k _{\tau_j}|=w^{*,i}_{\tau_j} -w^{\varepsilon_,i}_{\tau_{j}-}.
\end{align*}
As a consequence, the implicit function theorem for Lipschitz maps (e.g. \cite[Theorem 1]{clarke.76}) yields that, for small $\varepsilon >0$:
\begin{equation}\label{def-pi}
\Delta L^i_{\tau_j} = w^{*,i}
_{\tau_j} - w^{\varepsilon,i}_{\tau_{j}-}+P^i_{\TAC}(\tau_j).
\end{equation}
Here, under Condition~\eqref{ass:2} (no shortselling), the remainder term satisfies 
\begin{align}\label{bound-pi}
|P^i_{\TAC}(\tau_j)|\leq C\vep \| w^*_{\tau_j} - w^\varepsilon_{\tau_{j}-}\|_{{\R^m}}.
\end{align} 
In addition, 
\begin{equation}\label{jumppart}
\frac{V^\vep_{\tau_j}-V^\vep_{\tau_j-}}{V^\vep_{\tau_j-}}=-\vep \sum_{k=1}^m
|\Delta L^k_{\tau_j}|.
\end{equation}
Putting everything together we obtain
\begin{align}\label{TAChigherorder}
\int_0^T  \frac{d V_t^\varepsilon}{V_{t-}^\varepsilon}  =& \int_0^T  w^\varepsilon_t \left(\mu_t dt + \sigma_t dB_t\right)- \vep\sum_{i=1}^m \sum_{j=1}^N |w^{*,i}_{\tau_j} - w^{\varepsilon,i}_{\tau_{j}-}+P^i_{\TAC}(\tau_j)|.
\end{align}

\emph{Step 2: Estimation of the Quadratic Variation of the Jumps.}
Transaction costs are only paid at the trading times $\tau_j$. Hence, the cumulative amount of transaction costs paid up to terminal time is a pure jump process and its quadratic variation is negligible at the leading order. Indeed, \eqref{jumppart} gives
\begin{align*}
\E\left[\sum_{j=1}^N \left(\frac{V^\vep_{\tau_{j}}-V^\vep_{\tau_{j}-}}{V^\vep_{\tau_{j}-}}\right)^2 \right]= &\vep^2 \E\left[\sum_{j=1}^{N}  \left(\sum_{i=1}^m  |\Delta L^i_{\tau_j}|\right)^2 \right].
\end{align*}
By~\eqref{def-pi}, \eqref{bound-pi}, and Assumption~\ref{ass:2} (no shortselling), the changes $\Delta L^i_{\tau_j}$ of the risky weights are bounded.  Therefore, $\E[N]=O(\vep^{-\alpha})$ (cf.~\eqref{intcondA} and \eqref{defN}) implies that $$\E\left[\sum_{j=1}^m \left(\frac{V^\vep_{\tau_{j}}-V^\vep_{\tau_{j}-}}{V^\vep_{\tau_{j}-}}\right)^2 \right]=O(\vep^{2-\alpha}).$$
In addition, on $(\tau_{j-1},\tau_j)$ we have $d [V^\varepsilon_t]/(V_{t}^\varepsilon)^2= (w_t^\varepsilon)^\top \Sigma_t w^\varepsilon_{t} dt$. Thus, 
\begin{align}\label{quadvariation}
\E\left[\left|\int_0^T \frac{d [V^\varepsilon_t]}{(V_{t-}^\varepsilon)^2}-\int_0^T (w^\vep_t)^\top \Sigma_t w^\vep_{t}dt\right|\right]=O(\vep^{2-\alpha}).
\end{align}

\emph{Step 3: Expansion of the local mean-variance criterion $F^\vep$.}
Inserting~\eqref{TAChigherorder} and~\eqref{quadvariation} into the definition of $F^\vep$ (cf.~\eqref{objectivetac2}) and using that $w^*_t= \Sigma^{-1}_t\mu_t/\gamma$, the objective function simplifies to 
\begin{align*}
F^\varepsilon(A) = &\frac{1}{T}\E\left[ \int_0^T  ((w^\varepsilon_t)^\top  {\mu}_t - \frac{\gamma}{2}(w^\varepsilon_t)^\top \Sigma_t w^\varepsilon_t) dt \right]\nonumber\\
 &-  \frac{\vep}{T}\E\left[\sum_{i=1}^m \sum_{j=1}^N |w^{*,i}_{\tau_j} - w^{\varepsilon,i}_{\tau_{j}-}+P^i_{\TAC}(\tau_j)|\right]+ O(\varepsilon^{2-\alpha})\notag\\
=&\frac{1}{T}\E\left[ \int_0^T  ((w^*_t)^\top  {\mu}_t - \frac{\gamma}{2}(w^*_t)^\top \Sigma_t w^*_t) dt \right] -\frac{\gamma}{2T}\E\left[\int_0^T  (w^*_t-{w}^\varepsilon_t)^\top \Sigma_t (w^*_t-{w}^\varepsilon_t) dt\right]\notag\\
&+\frac{1}{T} \E\left[\int_0^T (w^*_t-{w}^\varepsilon_t)^\top (\mu_t-\gamma \Sigma_t w^*_t) dt\right]-  \frac{\vep}{T}\E\left[\sum_{i=1}^m \sum_{j=1}^N |w^{*,i}_{\tau_j} - w^{\varepsilon,i}_{\tau_{j}-}+P^i_{\TAC}(\tau_j)|\right]\notag\\
&+ O(\varepsilon^{2-\alpha})\notag\\
 =&\frac{1}{T}\E\left[ \int_0^T  ((w_t^*)^\top {\mu_t} - \frac{\gamma}{2}(w_t^*)^\top \Sigma_t w_t^*) dt \right]-\frac{1}{T}\E\left[\TAC(A) +\frac{\gamma}{2} \DE(A)\right]+O(\varepsilon^{2-\alpha})\nonumber\\
= &\frac{1}{T}\E\left[ \int_0^T  \frac{\mu_t^\top \Sigma^{-1}_t \mu_t}{2 \gamma} dt \right]-\frac{1}{T}\E\left[\TAC(A) +\frac{\gamma}{2} \DE(A)\right]+O(\varepsilon^{2-\alpha}). 
\end{align*}
This completes the proof of Proposition~\ref{NTACprop}.
\end{proof}

\subsection{Proof of Lemma~\ref{lemma-return-expansion}}\label{proof:32}
To establish Lemma~\ref{lemma-return-expansion}, we first derive the following useful asymptotic result:

\begin{mylemma}\label{lemmahigheroderdt}
Define the rebalancing times $(\tau_j)_{j=0,1,\ldots}$ as in Definition~\ref{defA} and let $(X^\vep)_{\vep >0}$ be a family of right-continuous, $m$-dimensional processes such that:
\begin{itemize}
\item[(i)] The process $X^\varepsilon$ satisfies $dX^\vep_t=\Theta^\vep_t dt+\Gamma^\vep_t dB_t$ for $t\in(\tau_{j-1},\tau_j)$. 
\item[(ii)] $X^\vep_{\tau_j}=0$ for all $j$.
\item[(ii)] There exists a constant $C>0$ such that
\begin{align*}
\sup_{t\in[0,T]} \left(\|\Theta^\vep_t\|_{\R^m}+ \|\Gamma^\vep_t\|_{2,1}\right)\leq C, \quad \lim_{\vep \downarrow 0}\E\left[\sup_{t\in[0,T]}\|\Gamma^\vep_t\|^2_{2,1}\right] = 0.
\end{align*}
\end{itemize}
Then:
$$\E\left[\sum_{j=1}^N \|X^\vep_{\tau_j-}-X^\vep_{\tau_{j-1}}\|_{\R^m}\right]=o(\vep^{-\alpha/2}).$$
\end{mylemma}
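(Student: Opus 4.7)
The plan is to split the one-step increment into its drift and diffusion parts and control each contribution separately. Since $X^\vep$ is reset to zero at every rebalancing time by assumption and satisfies the given SDE on the open intervals $(\tau_{j-1},\tau_j)$, taking left limits gives
\[
X^\vep_{\tau_j-} - X^\vep_{\tau_{j-1}} \;=\; X^\vep_{\tau_j-} \;=\; D_j + M_j, \qquad D_j := \int_{\tau_{j-1}}^{\tau_j}\Theta^\vep_s\,ds, \qquad M_j := \int_{\tau_{j-1}}^{\tau_j}\Gamma^\vep_s\,dB_s.
\]
It therefore suffices to show that both $\E\bigl[\sum_{j=1}^N \|D_j\|_{\R^m}\bigr]$ and $\E\bigl[\sum_{j=1}^N \|M_j\|_{\R^m}\bigr]$ are $o(\vep^{-\alpha/2})$ as $\vep\downarrow 0$.

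The drift contribution is disposed of immediately by the uniform bound on $\Theta^\vep$: pathwise,
\[
\sum_{j=1}^N \|D_j\|_{\R^m} \;\leq\; C \sum_{j=1}^N (\tau_j - \tau_{j-1}) \;\leq\; CT,
\]
which is $O(1)$ and hence $o(\vep^{-\alpha/2})$ since $\alpha>0$. For the stochastic part, I apply Cauchy--Schwarz once pathwise in $j$ (of which there are $N$ terms) and once under the expectation:
\[
\E\Bigl[\sum_{j=1}^N \|M_j\|_{\R^m}\Bigr] \;\leq\; \E[N]^{1/2}\,\E\Bigl[\sum_{j=1}^N \|M_j\|_{\R^m}^2\Bigr]^{1/2}.
\]
The first factor is $O(\vep^{-\alpha/2})$ by \eqref{intcondA}--\eqref{defN}, so the claim reduces to showing $\E[\sum_{j=1}^N \|M_j\|_{\R^m}^2]=o(1)$.

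The key observation here is that, because $A$ is adapted, $\tau_j=\tau_{j-1}+\vep^\alpha A_{\tau_{j-1}}$ is already $\mathcal{F}_{\tau_{j-1}}$-measurable, and in particular the event $\{j\leq N\}=\{\tau_j<T\}$ is also $\mathcal{F}_{\tau_{j-1}}$-measurable. Hence the conditional It\^o isometry between the stopping times $\tau_{j-1}$ and $\tau_j$ applies component-wise,
\[
\E\bigl[(M_j^i)^2\,\big|\,\mathcal{F}_{\tau_{j-1}}\bigr] \;=\; \E\Bigl[\int_{\tau_{j-1}}^{\tau_j}\|\Gamma^{\vep,i}_s\|_{\R^d}^2\,ds\,\Big|\,\mathcal{F}_{\tau_{j-1}}\Bigr],
\]
and after summing over $i\in\{1,\ldots,m\}$ and $j\leq N$ the intervals telescope to
\[
\E\Bigl[\sum_{j=1}^N \|M_j\|_{\R^m}^2\Bigr] \;=\; \E\Bigl[\int_0^{\tau_N}\sum_{i=1}^m\|\Gamma^{\vep,i}_s\|_{\R^d}^2\,ds\Bigr] \;\leq\; T\,\E\Bigl[\sup_{t\in[0,T]}\|\Gamma^\vep_t\|_{2,1}^2\Bigr],
\]
where I have used $\sum_i\|\Gamma^{\vep,i}\|_{\R^d}^2\leq \bigl(\sum_i\|\Gamma^{\vep,i}\|_{\R^d}\bigr)^2=\|\Gamma^\vep\|_{2,1}^2$ and $\tau_N\leq T$. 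By hypothesis this bound tends to zero, which closes the argument. The only delicate point is the legitimacy of the conditional It\^o isometry across the stopping times, but this is routine once the $\mathcal{F}_{\tau_{j-1}}$-measurability of the interval lengths $\tau_j-\tau_{j-1}$ is noted; I do not anticipate any other obstacle.
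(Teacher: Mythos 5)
Your proof is correct, and it takes a genuinely different route from the paper's on the martingale part (the drift estimate is the same, and essentially forced). The paper applies the Burkholder--Davis--Gundy inequality interval by interval to get $\E\bigl[\sum_{j=1}^N\|M_j\|_{\R^m}\bigr]\le C\,\E\bigl[\sup_{t\in[0,T]}\|\Gamma^\vep_t\|_{2,1}\sum_{j=1}^N(\tau_j-\tau_{j-1})^{1/2}\bigr]$, then exploits the algebraic identity $(\tau_j-\tau_{j-1})^{1/2}=(\tau_j-\tau_{j-1})/\bigl(\vep^{\alpha/2}A_{\tau_{j-1}}^{1/2}\bigr)$ to bound the random sum by $T\vep^{-\alpha/2}/\inf_{t}A_t^{1/2}$, and closes with H\"older's inequality, pairing $\E[\sup_t\|\Gamma^\vep_t\|_{2,1}^2]\to 0$ against $\E[1/\inf_t A_t]<\infty$. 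You instead split off the counting variable by Cauchy--Schwarz, $\E[\sum_j\|M_j\|]\le\E[N]^{1/2}\,\E[\sum_j\|M_j\|^2]^{1/2}$, spend the integrability of $1/\inf_t A_t$ on $\E[N]=O(\vep^{-\alpha})$, and kill the second factor with the It\^o isometry, under which the sum telescopes into $\E\bigl[\int_0^{\tau_N}\|\Gamma^\vep_s\|_F^2\,ds\bigr]\le T\,\E\bigl[\sup_t\|\Gamma^\vep_t\|_{2,1}^2\bigr]=o(1)$. Both arguments consume exactly the same two hypotheses, and both hinge on the fact you rightly isolate, namely that $\tau_j$ and hence $\{j\le N\}=\{\tau_j<T\}$ are $\mathcal{F}_{\tau_{j-1}}$-measurable; the paper needs this implicitly anyway to apply BDG term by term inside the random sum, so making it explicit is a virtue of your write-up. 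What the isometry buys you is an exact second-moment identity that converts cleanly into a time integral over $[0,T]$; what BDG buys the paper is a first-moment estimate that avoids the (harmless at order $o(\vep^{-\alpha/2})$) loss from your extra Cauchy--Schwarz. The one point to tidy, which you flag yourself, is that the conditional isometry should formally be applied to stopped objects: either replace $\tau_j$ by $\tau_j\wedge T$ or pull the $\mathcal{F}_{\tau_{j-1}}$-measurable indicator $\mathbf{1}_{\{j\le N\}}$ inside the stochastic integral; on $\{j\le N\}$ nothing changes, and the integrand is then bounded with support in $[0,T]$, so the required integrability is automatic.
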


\begin{proof}
The Burkholder-Davis-Gundy inequality yields
\begin{align*}
\E\left[\sum_{j=1}^N \|X^\vep_{\tau_j-}-X^\vep_{\tau_{j-1}}\|_{\R^m}\right]=&\E\left[\sum_{j=1}^N \left\|\int_{\tau_{j-1}}^{\tau_{j}}\Theta^\vep_t dt+\Gamma^\vep_t dB_t\right\|_{\R^m}\right]\\
\leq& \E\left[\sum_{j=1}^N \int_{\tau_{j-1}}^{\tau_j} \|\Theta^\vep_t\|_{\R^m} dt\right]+\E\left[\sum_{j=1}^N\left\|\int_{\tau_{j-1}}^{\tau_j}\Gamma^\vep_t dB_t\right\|_{\R^m}\right]\\
\leq& C\E\left[\sum_{j=1}^N (\tau_j-\tau_{j-1})\right]+C\E\left[\sup_{t\in[0,T]}\|\Gamma^\vep_t\|_{2,1} \sum_{j=1}^N ({\tau_j-\tau_{j-1}} )^{1/2}\right]\\
\leq & C+C\vep^{-\alpha/2}\E\left[\sup_{t\in[0,T]}\|\Gamma^\vep_t\|_{2,1}   \frac{T}{\inf_{t\in[0,T]} A_t^{1/2}}\right].
\end{align*}
Here, we have used in the last step that $\sum_{j=1}^N(\tau_j-\tau_{j-1})^{1/2}=\sum_{j=1}^N\frac{\tau_j-\tau_{j-1}}{\vep^{\alpha/2}A_{\tau_{j-1}}^{1/2}} \leq  \frac{T}{\vep^{\alpha/2}\inf_{t\in[0,T]} A_t^{1/2}}$, and conclude using H\"older's inequality and the assumptions on $A$ and $\Gamma^\vep$.
\end{proof}

By combining Lemma~\ref{lemmahigheroderdt} with elementary estimates for the normal distribution, we can now establish Lemma~\ref{lemma-return-expansion}:

\begin{proof}[Proof of Lemma~\ref{lemma-return-expansion}]
The finiteness of the expectations follows from the integrability conditions on $A$ and the boundedness of $\Sigma$ and $\beta$.

\emph{Step 1: Expansion of the Transaction Cost Loss.}
We now perform a more detailed analysis of the leading order term $w^{*,i}_{\tau_{j}}-w^{\varepsilon,i}_{\tau_j-}$ in~\eqref{def-pi}. Using the dynamics of $w^*$ and $w^\varepsilon$ in~\eqref{dynamics-w} and \eqref{dynamicsweightTAC}, respectively, we find
\begin{align}
w^{*,i}_{\tau_{j}}-w^{\varepsilon,i}_{\tau_j-}=& \int_{\tau_{j-1}}^{\tau_j} \left(\tilde{\mu}^i_t-w^{\varepsilon,i}_{t}\mu^{w^\varepsilon,i}_t\right)dt+\int_{\tau_{j-1}}^{\tau_j}\left[\tilde \sigma^i_t-w^{\varepsilon,i}_{t}\left( \sigma^{i}_t-\sum_{k=1}^m w^{\varepsilon,k}_{t}\sigma^k_t\right)\right] dB_t.\label{TACtransform1}
\end{align}
Adding and subtracting $\int_{\tau_{j-1}}^{\tau_j} \beta^i_{\tau_{j-1}} dB_t$ in~\eqref{TACtransform1} yields
\begin{align*}
w^{*,i}_{\tau_{j}}-w^{\varepsilon,i}_{\tau_j-}=& \beta^i_{\tau_{j-1}} (B_{\tau_j}-B_{\tau_{j-1}}) + R^i_{\TAC}(\tau_j),
\end{align*}
where $R^i_{\TAC}(\tau_j)$ is defined as follows:
\begin{align}\label{def-ri}
R^i_{\TAC}(\tau_j):=& \int_{\tau_{j-1}}^{\tau_j} J^{FV,i}_t dt+ \int_{\tau_{j-1}}^{\tau_j} J^{S,i}_t d B_t,
\end{align}
with 
\begin{align*}
J^{FV,i}_t=\tilde{\mu}^i_t-w^{\varepsilon,i}_{t} \mu^{w^\varepsilon,i}_{t}, \quad 
J^{S,i}_t =\tilde \sigma^i_t-w^{\varepsilon,i}_{t}\left( \sigma^{i}_t-\sum_{k=1}^m w^{\varepsilon,k}_{t} \sigma^k_t\right)-\beta^i_{\tau_{j-1}}, \quad \mbox{for } t\in(\tau_{j-1},\tau_j).
\end{align*}
In vector respectively matrix notation we have $J^{FV}= (J^{FV,1},\ldots,J^{FV,m})^\top \in \R^m$ and $J^S=(J^{S,1},\ldots,J^{S,m})^\top \in \mathcal{M}_{m\times d} (\R)$.
Lemma~\ref{lemmawhat} and Lemma~\ref{momentmerton} imply that $J^{FV}$ is bounded. Using that the mapping 
$$w\to \left[\tilde \sigma^i_t-w^i\left( \sigma^{i}_t-\sum_{k=1}^m w^k \sigma^k_t\right)\right]$$ 
is locally Lipschitz continuous with Lipschitz constant $L$ depending on $m$ and $K_\sigma$ (cf.~Assumption~\ref{ass:1}(ii)), we obtain that 
\begin{align*}
\|J^{S,i}_t \|_{\R^d}\leq L\| w^\vep_t-w^*_t\|_{\R^m} \mbox { and }|J^{FV,i}_t|\leq C \mbox{ on }[\tau_{j-1},\tau_j),
\end{align*}
where $C$ denotes a real constant. In view of~\eqref{def-pi} we also have
$$\vep\E\left[\sum_{i=1}^m \sum_{j=1}^N |\Delta L^i_{\tau_j}|\right]=\vep\E\left[\sum_{i=1}^m \sum_{j=1}^N |\beta^i_{\tau_{j-1}} (B_{\tau_j}-B_{\tau_{j-1}})+R^i_{\TAC} (\tau_j)+P^i_{\TAC}(\tau_j)|\right],$$
where $R^i$ and $P^i$ are defined as in \eqref{def-ri} and \eqref{def-pi}, respectively. 

In order to prove the expansion~\eqref{claim-expansion1} it suffices to show that:
\begin{align}
\vep\E\left[ \sum_{i=1}^m \sum_{j=1}^N |\beta^i_{\tau_{j-1}} (B_{\tau_j}-B_{\tau_{j-1}})|\right]=& \varepsilon^{1-\alpha/2} \E\left[\sqrt{\frac{2}{\pi}} \int_{0}^{T}  \frac{ \|\beta_t\|_{2,1} }{\sqrt{A_t}} dt\right]+ o(\varepsilon^{1-\alpha/2}),\label{expansion-return1}
\end{align}
and
\begin{align}
\vep\E\left[\sum_{i=1}^m \sum_{j=1}^N |R^i_{\TAC} (\tau_j)|+ |P^i_{\TAC}(\tau_j)|\right]=& o(\varepsilon^{1-\alpha/2}).\label{expansion-return2}
\end{align}
For~\eqref{expansion-return1} we use the independent increments property of Brownian motion and the scaling property of the normal distribution to obtain
\begin{align*}
\vep\E\left[ \sum_{i=1}^m \sum_{j=1}^N |\beta^i_{\tau_{j-1}} (B_{\tau_j}-B_{\tau_{j-1}})|\right] 
 &= \vep\E\left[ \sum_{j=1}^\infty \sum_{i=1}^m \mathbf{1} _{\{\tau_{j}<T\}}|\beta^i_{\tau_{j-1}}( B_{\tau_j}-B_{\tau_{j-1}})|\right]\\
 &= \vep\E\left[  \E\left[\sum_{j=1}^\infty \sum_{i=1}^m \mathbf{1} _{\{\tau_{j}<T\}}|\beta^i_{\tau_{j-1}}( B_{\tau_j}-B_{\tau_{j-1}})| \Biggl| \mathcal{F}_{\tau_{j-1}}\right]\right]\\
 &=\vep\E\left[\sum_{j=1}^\infty \mathbf{1} _{\{\tau_{j}<T\}}\sum_{i=1}^m   \|\beta^i_{\tau_{j-1}}\|_{\R^d} |Z| \sqrt{\tau_{j}-\tau_{j-1}}\right]\\
 &= \vep\E\left[\sum_{\substack{0<\tau_j < T}} \sum_{i=1}^m  \sqrt{\frac{2}{\pi}} \|\beta^i_{\tau_{j-1}}\|_{\R^d}  \sqrt{\varepsilon^\alpha A_{\tau_{j-1}}}\right].
\end{align*}
Here, $Z$ denotes an independent univariate standard normal random variable, for which $\E[|Z|] = \sqrt{2/\pi}$. To approximate the random sum over $\tau_j$ we rewrite the expression inside the sum:
\begin{align*}
\E[\TAC(A)] 
 &=\vep\E\left[\sum_{\substack{0<\tau_j< T}} \sum_{i=1}^m  \sqrt{\frac{2}{\pi}} \|\beta^i_{\tau_{j-1}}\|_{\R^d}\frac{\varepsilon^\alpha A_{\tau_{j-1}}}{\sqrt{\varepsilon^\alpha A_{\tau_{j-1}}}}\right]\nonumber\\
 &= \E\left[\sum_{\substack{0<\tau_j< T}} \sum_{i=1}^m  \sqrt{\frac{2}{\pi}}\varepsilon^{1-\alpha/2}  \|\beta^i_{\tau_{j-1}}\|_{\R^d} \frac{\tau_{j}-\tau_{j-1}}{\sqrt{ A_{\tau_{j-1}}}}\right].\nonumber
 \end{align*}
Since $\sup_{y \in E}\|\beta(y)\|_{2,1}$ is bounded by Corollary~\ref{betabound}, the lower bound on $A$ from Definition~\ref{defA} and the estimate~\eqref{dtorder} imply
 \begin{align}
\E[\TAC(A)] =& \E\Biggl[\sqrt{\frac{2}{\pi}}\varepsilon^{1-\alpha/2} \int_{0}^{T}   \frac{  \|\beta_{t}\|_{2,1}}{\sqrt{A_t}} dt\nonumber\\
 &\qquad + \sqrt{\frac{2}{\pi}}\varepsilon^{1-\alpha/2} \Bigg( \sum_{\substack{0<\tau_j < T}}\frac{\|\beta_{\tau_{j-1}}\|_{2,1}}{\sqrt{A_{\tau_{j-1}}}} (\tau_j-\tau_{j-1})-\int_{0}^{T} \frac{\|\beta_t\|_{2,1}}{\sqrt{A_t}}  dt\Bigg) \Biggr]\nonumber\\
 =& \E\left[\sqrt{\frac{2}{\pi}}\varepsilon^{1-\alpha/2} \int_{0}^{T}  \frac{ \|\beta_t\|_{2,1} }{\sqrt{A_t}} dt\right]+ o(\varepsilon^{1-\alpha/2}).\label{TACsol}
\end{align}
Here, dominated convergence is applicable in the last step due to the boundedness of $\beta$ and the integrability of $1/A$. 

We now turn to \eqref{expansion-return2} and consider the first term in the corresponding expectation. To estimate it, we apply Lemma \ref{lemmahigheroderdt} with 
 $$X^\vep_t:= \int_{\tau_{j-1}}^{t} J^{FV}_s ds+ \int_{\tau_{j-1}}^{t} J^{S}_s d B_s \quad \mbox{ on }[\tau_{j-1},\tau_j).$$
In view of the uniform boundedness of $J^S$ and $J^{FV}$ and because 
$$\E\left[\sup_{s\in[0,T]} \|J^{S}_s\|^2_{2,1}\right]\leq 2^{m-1} L^2 \E\left[\sup_{s\in[0,T]} \| w^\vep_s-w^*_s\|^2_{\R^m}\right],$$
it then remains to show $\E[\sup_{s\in[0,T]} \| w^\vep_s-w^*_s\|^2_{\R^m}] \rightarrow0$ as $\vep\downarrow 0$.
To this end, recall that
$$ w_t^\vep-w_t^*= \int_{\tau_{j-1}}^{t} J^{FV}_s ds+ \int_{\tau_{j-1}}^t  \left(J^{S}_s +\beta_{\tau_{j-1}}\right) d B_s, \quad \mbox{for } t\in[\tau_{j-1},\tau_{j}).$$

We define the following continuous semimartingale on $[0,T]$, 
\begin{align*}
&\tilde X^\vep_0=w_0^\vep-w_0^*,\mbox{ and for j=1\ldots N},\\
&\tilde X^\vep_t=\tilde X^\vep_{\tau_{j-1}}+ \int_{\tau_{j-1}}^{t} J^{FV}_s ds+\int_{\tau_{j-1}}^t  \left(J^{S}_s +\beta_{\tau_{j-1}}\right) d B_s, \quad \mbox{for } t\in[\tau_{j-1},\tau_{j}).
\end{align*}

For all $\omega\in \Omega$, denote by $\tilde K^\vep(\omega)$ the $1/16-$H\"older constant of $\tilde X^\vep(\omega)$ on $[0,T]$. Note that $\tilde X^\vep$ and $w^\vep-w^*$ have the same Holder constant on $(\tau_{j-1},\tau_j)$. Using the Burkholder-Davis-Gundy inequality and the uniform boundedness of $J^{FV}$ and $J^S$, it follows that
 \begin{align}\label{kolmogorovbound}
 \E\left[\|\tilde X^\vep_t-\tilde X^\vep_s\|_{\R^m}^4\right]\leq C(t-s)^2.
\end{align}  
Kolmogorov's H\"older continuity criterion (e.g. \cite[Corollary 1.2]{walsh.84}) in turn shows that $\E\left[(\tilde K^\vep)^4\right]<C$ for a constant $C$ that does not depend on $\vep$.
To prove $\E\left[\sup_{s\in[0,T]}\|w^\vep_{s}-w^*_{s}\|^2_{\R^m} \right]\rightarrow 0\mbox{ as }\vep\rightarrow 0,$ we fix an arbitrary small $r>0$ and find a sufficiently large $M_r>0$ such that the set $\Omega_r:=\{\sup_{s} A_s^{1/8}\leq M_r\}$ satisfies 
$$\dbP(\Omega_r)\geq 1-\frac{r}{2m}.$$
Then, since for each $i = 1,\dots,m$ the processes $w^{*,i}$ and $w^{\vep,i}$ are $[0,1)$-valued, using the definition of the trading times we obtain
\begin{align*}
\E\left[\sup_{s\in[0,T]}\|w^\vep_{s}-w^*_{s}\|^2_{\R^m} \right]=&\E\left[\sup_{s\in[0,T]}\|w^\vep_{s}-w^*_{s}\|^2_{\R^m} {\mathbf{1}}_{\Omega_r}\right]+\E\left[\sup_{s\in[0,T]}\|w^\vep_{s}-w^*_{s}\|^2_{\R^m} {\mathbf{1}}_{\Omega_r^c}\right]\\
 \leq &\E\left[(\tilde K ^\vep)^2 \ \vep^{\alpha/8}\sup_{s\in[0,T]} A_s^{1/8} \ {\mathbf{1}}_{\Omega_r}\right]+\E\left[\sup_{s\in[0,T]}\|w^\vep_{s}-w^*_{s}\|^2_{\R^m} {\mathbf{1}}_{\Omega_r^c}\right]\\
\leq& M_r \vep^{\alpha/8}\E[(\tilde{K}^{\vep})^2]+r/2.
 \end{align*}
Due to the uniform bound on $\E[(\tilde{K}^{\vep})^4]$, we can pick $\vep>0$ small enough to ensure $M_r \vep^{\alpha/8}\E[(\tilde{K}^{\vep})^2]\leq r/2$ and in turn
 $$\E\left[\sup_{s\in[0,T]}\|w^\vep_{s}-w^*_{s}\|^2_{\R^m} \right] \leq r.$$
 As $r$ was arbitrary, this shows that Lemma~\ref{lemmahigheroderdt} is applicable for $X^\vep$ and yields
\begin{equation*}
\vep\E\left[\sum_{i=1}^m \sum_{j=1}^N |R^i_{\TAC} (\tau_j)|\right]=o(\varepsilon^{1-\alpha/2}).
\end{equation*}
Therefore, the first term in \eqref{expansion-return2} is indeed of the claimed asymptotic order. As for the second, notice that for $\xi,\eta \in \R$ the triangle inequalities $|\xi|-|\eta|\leq |\xi+\eta| \leq |\xi|+|\eta|$ and~\eqref{expansion-return1} show that
\begin{align}
\lim_{\varepsilon \rightarrow 0}\frac{\vep\E\left[\sum_{i=1}^m \sum_{j=1}^N |w^{*,i}_{\tau_j}-w^{\vep,i}_{\tau_j-}|\right]}{\vep\E\left[\sum_{i=1}^m \sum_{j=1}^N |\beta^i_{\tau_{j-1}} (B_{\tau_j}-B_{\tau_{j-1}})\right]} =& \lim_{\varepsilon \rightarrow 0}\frac{\vep \E\left[\sum_{i=1}^m \sum_{j=1}^N |\beta^i_{\tau_{j-1}} (B_{\tau_j}-B_{\tau_{j-1}})+R^i_{\TAC} (\tau_j)|\right]}{\vep \E\left[\sum_{i=1}^m \sum_{j=1}^N |\beta^i_{\tau_{j-1}} (B_{\tau_j}-B_{\tau_{j-1}})\right]}\nonumber\\
 =& 1.\label{MandR}
\end{align}
Moreover, the inequality~\eqref{bound-pi} gives
$$\vep \E\left[\sum_{i=1}^m \sum_{j=1}^N |P^i_{\TAC} (\tau_j)|\right] \leq C m \vep^2 \E\left[\sum_{i=1}^m \sum_{j=1}^N |w^{*,i}_{\tau_j}-w^{\vep,i}_{\tau_j-}|\right].$$ 
Therefore, 
\begin{align*}
\lim_{\varepsilon \rightarrow 0}\frac{\vep\E\left[\sum_{i=1}^m \sum_{j=1}^N |P^i_{\TAC} (\tau_j)|\right]}{\varepsilon^{1-\alpha/2}} \leq& C m \lim_{\varepsilon \rightarrow 0} \vep \frac{\vep \E\left[\sum_{i=1}^m \sum_{j=1}^N |w^{*,i}_{\tau_j}-w^{\vep,i}_{\tau_j-}|\right]}{\vep \E\left[\sum_{i=1}^m \sum_{j=1}^N |\beta^i_{\tau_{j-1}} (B_{\tau_j}-B_{\tau_{j-1}})\right]} \\
 & \qquad \times \frac{\vep \E\left[\sum_{i=1}^m \sum_{j=1}^N |\beta^i_{\tau_{j-1}} (B_{\tau_j}-B_{\tau_{j-1}})\right]}{\varepsilon^{1-\alpha/2}}\\
 =& 0,
\end{align*}
where we have used~\eqref{MandR} and~\eqref{expansion-return1} in the last step. This
completes the proof of \eqref{expansion-return2} and in turn \eqref{claim-expansion1}.

\emph{Step 2: Expansion of the Discretization Error.} 
For the discretization error we proceed similarly. For $j\in \mathbb{N}$ and $r\in [\tau_{j-1},\tau_j)$ recall that
$$w^{*,i}_r-w^{\varepsilon,i}_r = \int_{\tau_{j-1}}^{r} \left(\tilde{\mu}^i_t-w^{\varepsilon,i}_t \mu^{w^\varepsilon,i}_t\right)dt + \left[\tilde \sigma^i_t-w^{\varepsilon,i}_t\left( \sigma^{i}_t-\sum_{k=1}^m w^{\varepsilon,k}_t \sigma^k_t\right)\right] dB_t.$$ 
Arguing similarly as in Step 1, we can replace $w^*_t-w^\varepsilon_t$ and $\Sigma_t$ at the leading order with $\int_{\tau_{j-1}}^{r}\beta_{\tau_{j-1}} d B_u \in \R^m$ and $\Sigma_{\tau_{j-1}}$, respectively. Therefore, the discretization error can be rewritten as
\begin{align*}
\E[\DE(A)] &=\E\left[\int_0^T  (w_t^*-{w}_t^\varepsilon)^\top \Sigma_t (w_t^*-{w}_t^\varepsilon) dt\right]\\
 &=\E\left[\sum_{0<\tau_j < T} \int_{\tau_{j-1}}^{\tau_j} \left(\int_{\tau_{j-1}}^{t}\beta_{\tau_{j-1}}  d B_u\right)^\top  \Sigma_{\tau_{j-1}} \left(\int_{\tau_{j-1}}^{t}\beta_{\tau_{j-1}}  d B_u\right) dt\right]+o(\varepsilon^{\alpha}).
   \end{align*}
Conditioning on $\mathcal{F}_{\tau_{j-1}}$ and integrating over $t$ in turn yields
 \begin{align*}
\E[\DE(A)]  &=\E\left[\sum_{0<\tau_j < T} \int_{\tau_{j-1}}^{\tau_j} \mathrm{tr}\left(\beta_{\tau_{j-1}}^\top \Sigma_{\tau_{j-1}}\beta_{\tau_{j-1}}\right)(t-\tau_{j-1}) dt \right]+o(\varepsilon^{\alpha})\\
 &=\frac{1}{2}\E\left[\sum_{0<\tau_j < T} \mathrm{tr}\left(\beta_{\tau_{j-1}}^\top \Sigma_{\tau_{j-1}} \beta_{\tau_{j-1}}\right)(\tau_j-\tau_{j-1})^2 \right]+o(\varepsilon^{\alpha})\\
&=\frac{\varepsilon^\alpha}{2}\E\Biggl[\sum_{0<\tau_j  < T} \mathrm{tr}\left(\beta_{\tau_{j-1}}^\top \Sigma_{\tau_{j-1}}\beta_{\tau_{j-1}}\right)A_{\tau_{j-1}}(\tau_j-\tau_{j-1}) \Biggr]+o(\varepsilon^{\alpha}).
\end{align*}
Since the processes $\Sigma$ and $\beta$ are assumed to be bounded and continuous and $A$ satisfies~\eqref{intcondA}, applying the same dominated convergence argument as before implies that the sum can again be approximated by the corresponding integral:
\begin{align*}
\E[\DE(A)]  =\frac{\varepsilon^\alpha}{2}\E\Biggl[\int_0^T \mathrm{tr}\left(\beta_{t}^\top \Sigma_t \beta_{{t}}\right)A_{t}dt \Biggr]+ o (\varepsilon^{\alpha}).
\end{align*}
This completes the proof of \eqref{claim-expansion2} and in turn Lemma~\ref{lemma-return-expansion}.
\end{proof}

\subsection{Proof of Theorem~\ref{mainresult_mul}}
\begin{proof}[Proof of Theorem~\ref{mainresult_mul}]
With the choice $\alpha = \frac{2}{3}$, the asymptotic expansion of $F^\vep$ simplifies to
\begin{align*}
F^\varepsilon(A) &=\frac{1}{T}\E\left[ \int_0^T  \frac{\mu_t^\top \Sigma^{-1}_t \mu_t}{2 \gamma} dt \right]-\frac{\varepsilon^{2/3}}{T}\E\Biggl[\int_0^T\frac{\gamma}{4} \mathrm{tr}\left(\beta_{t}^\top \Sigma_t \beta_{{t}}\right)A_{t}dt+\sqrt{\frac{2}{\pi}} \int_{0}^{T}  \frac{ \|\beta_t\|_{2,1} }{\sqrt{A_t}} dt \Biggr]+o(\varepsilon^{2/3})\notag.
 \end{align*}
To obtain the optimal discretization rule $A$, it therefore remains to solve the following optimization problem:
\begin{equation}\label{costfunctional}
\min_{(A_t)_{t \in [0,T]}} \varepsilon^{2/3}  \E\left[\int_{0}^{T}  \frac{\gamma}{4} \mathrm{tr}\left(\beta_{t}^\top \Sigma_t \beta_{t}\right)A_{t} + \sqrt{\frac{2}{\pi}}  \frac{ \|\beta_t\|_{2,1} }{\sqrt{A_t}} dt\right].
\end{equation}
\emph{Pointwise} minimization of the integrand readily yields the optimizer from~\eqref{eq:optimA}:
\begin{equation*}
A^*_t = \left(\frac{\sqrt{\frac{2}{\pi}}\|\beta_t\|_{2,1}  }{\frac{\gamma}{2} \mathrm{tr}\left(\beta_{t}^\top \Sigma_t \beta_{t}\right)}\right)^{2/3}.
\end{equation*}
Plugging $A^*$ back into the cost functional~\eqref{costfunctional} in turn yields that the corresponding minimal leading-order total cost is given by the formula reported in~\eqref{tcformula}.

To complete the proof, it remains to show that the process $A^*$ is admissible in the sense of Definition~\ref{defA}. To this end, notice that the sub-multiplicativity of the Frobenius norm yields
\begin{equation*}
\mathrm{tr}\left(\beta^\top_t \Sigma_t \beta_t\right) = \mathrm{tr}\left(\beta_t^\top \sigma_t \sigma^\top_t \beta_t\right)= \| \beta_t \sigma_t\|_F^2 \leq \|\beta_t\|_F^2 \|\sigma_t\|_F^2 \leq \|\beta_t\|_{2,1}^2 \|\sigma_t\|_{2,1}^2.
\end{equation*}
In view of Corollary~\ref{betabound}, it follows that
$$A^*_t \geq C \left(\frac{\|\beta_t\|_{2,1}}{\|\beta_t\|_{2,1}^2 \|\sigma_t\|_{2,1}^2}\right)^{2/3} \geq C \left(\frac{1}{K_\sigma^2 K_\beta}\right)^{2/3} >0, \quad \text{for all $t \in [0,T]$.} $$
Therefore, $\E[(\inf_{t\in [0,T]} A^*_t)^{-1}]<\infty$. Moreover, the uniform ellipticity of the covariance matrix $\Sigma$ implies that 
\begin{align*}
\mathrm{tr}\left(\beta_t^\top \Sigma_t \beta\right) \geq & C\ \mathrm{tr}\left(\beta_t^\top \beta_t\right) = C \|\beta_t\|_F^2 \geq C \|\beta_t\|_{2,1}^2.
\end{align*}
As a result:
$$A^*_t \leq C \left(\frac{\|\beta_t\|_{2,1}}{\|\beta_t\|_{2,1}^2}\right)^{2/3} \leq  \frac{C}{\|\beta_t\|_{2,1}^{2/3}},$$
so that $\int_0^T A^*_t dt$ has finite expectation by Assumption~\eqref{Assumptionv}. In summary, the discretization rule $A^*$ is admissible and therefore indeed asymptotically optimal.
\end{proof}

\end{appendix}

\bibliographystyle{abbrv}
\bibliography{multiTAC}

\end{document}